\documentclass[]{article}
\usepackage{fullpage}

\usepackage[pdftex]{graphicx}
\usepackage[cmex10]{amsmath}
\interdisplaylinepenalty=2500
\usepackage{algpseudocode}
\usepackage[caption=false,font=footnotesize]{subfig}
\usepackage{amssymb}
\usepackage{amsthm}
\usepackage{thmtools}

\theoremstyle{definition}
\newtheorem{definition}{Definition}
\theoremstyle{plain}
\newtheorem{theorem}[definition]{Theorem}
\newtheorem{lemma}[definition]{Lemma}

\begin{document}

\begin{titlepage}
\title{HSkip+: A Self-Stabilizing Overlay Network for Nodes with Heterogeneous Bandwidths}

\date{}
\author{Matthias Feldotto\\
Heinz Nixdorf Institute \&\\
Department of Computer Science\\
University of Paderborn, Germany\\
Email: feldi@mail.upb.de
\and
Christian Scheideler\\
Theory of Distributed Systems Group\\
Department of Computer Science \\
University of Paderborn, Germany\\
Email: scheideler@upb.de
\and
Kalman Graffi\\
Technology of Social Networks Group\\
Department of Computer Science \\
University of D\"usseldorf, Germany\\
Email: graffi@cs.uni-duesseldorf.de 
\thanks{This work was partially supported by the German Research Foundation (DFG) within the Collaborative Research Centre ``On-The-Fly Computing'' (SFB 901).\newline\newline
This is a long version of a paper published by IEEE in the Proceedings of the 14-th IEEE International Conference on Peer-to-Peer Computing, available at http://dx.doi.org/10.1109/P2P.2014.6934300.\newline\newline
\copyright 2014 IEEE. Personal use of this material is permitted. Permission from IEEE must be obtained for all other uses, in any current or future media, including reprinting/republishing this material for advertising or promotional purposes, creating new collective works, for resale or redistribution to servers or lists, or reuse of any copyrighted component of this work in other works.}
}

\maketitle

\begin{abstract}
In this paper we present and analyze HSkip+, a self-stabilizing overlay
network for nodes with arbitrary heterogeneous bandwidths. HSkip+ has the same
topology as the Skip+ graph proposed by Jacob et al.
\cite{Jacob:2009:DPT:1582716.1582741} but its self-stabilization mechanism
significantly outperforms the self-stabilization mechanism proposed for Skip+.
Also, the nodes are now ordered according to their bandwidths and not
according to their identifiers. Various other solutions have already been
proposed for overlay networks with heterogeneous bandwidths, but they are not
self-stabilizing. In addition to HSkip+ being self-stabilizing, its
performance is on par with the best previous bounds on the time and work for
joining or leaving a network of peers of logarithmic diameter and degree and
arbitrary bandwidths. Also, the dilation and congestion for routing messages
is on par with the best previous bounds for such networks, so that HSkip+
combines the advantages of both worlds. Our theoretical investigations are
backed by simulations demonstrating that HSkip+ is indeed performing much
better than Skip+ and working correctly under high churn rates.
\end{abstract}
\end{titlepage}

\section{Introduction}
\label{sec:Introduction}

Peer-to-peer systems have become very popular for a variety of reasons. For
example, the fact that peer-to-peer systems do not need a central server means
that individuals can search for information or cooperate without fees or an
investment in additional high-performance hardware. Also, peer-to-peer systems
permit the sharing of resources (such as computation and storage) that
otherwise sit idle on individual computers. However, the absence of any
trusted anchor like a central server also has its disadvantages since churn
and adversarial behavior has to be managed by the peers without outside help.
One promising approach that has been investigated in recent years is to use
topological self-stabilization, i.e., the overlay network of the peer-to-peer
system can recover its topology from any state, as long as it is initially
weakly connected. Various topologies have already been considered, but so far
mostly the case has been studied that the peers have the same resources
(concerning speed, storage, and bandwidth) whereas in reality the available
resources may differ significantly from peer to peer. An exception is a
self-stabilizing system for peers with non-uniform storage
\cite{Kniesburges2013}, but no self-stabilizing peer-to-peer system for peers
with non-uniform bandwidth has been proposed yet. Due to the current development, especially with mobile devices, this property becomes more important and is often the bottleneck in the system. This paper is the first to
propose a system which considers the bandwidth in its design.

\subsection{Our model}

\subsubsection{Network model}
\label{sec:NetworkModel}

Similar to Nor et al. \cite{Nor:2011:CSD:2050613.2050640}, we assume that we
have a (potentially dynamic) set $V$ of $n$ nodes with unique identifiers.
Each node $v$ maintains a set of variables (determined by the protocol) that
define the {\em state} of node $v$. For each pair of nodes $u$ and $v$ we have
a channel $C_{u,v}$ that holds messages that are currently in transit from $u$
to $v$. We assume that the channel capacity is unbounded, there is no message
loss, and the messages are delivered asynchronously to $v$ in FIFO order. The sequence of all
messages stored in $C_{u,v}$ constitutes the {\em state} of $C_{u,v}$.

Whenever a node $u$ stores a reference of node $v$, we consider that as an
{\em explicit edge} $(u,v)$, and whenever a channel $C_{u,v}$ holds a message
with a reference of node $w$, we consider that as an {\em implicit edge}
$(v,w)$. Node references are assumed to be atomic and read-only, i.e., they
cannot be split, encoded, or altered. They can only be deleted or copied to
produce new references that can be sent to other nodes.
If there is a reference of a node that is not in the system any more, we
assume that this can be detected by the nodes, so that without loss of
generality we can assume that only references of nodes that are still in the
system are present in the nodes and the channels. Whenever a message in
$C_{u,v}$ contains a node reference $w$, it also contains $w$'s bandwidth and
identifier so that this information can be corrected in $v$ if needed (though
it might initially be wrong).
Only point-to-point communication is possible, and the nodes can only send
messages along explicit edges (since they are not yet aware of the endpoint of
an implicit edge). Whenever node $u$ sends a message along an explicit edge
$(u,v)$, it is transferred to $C_{u,v}$. Let $E_e$ denote the set of all
explicit edges and $E_i$ denote the set of all implicit edges. The overlay
network formed by the system is defined as a directed graph $G=(V,E)$ with
$E=E_e \cup E_i$. With $G_e=\left(V,E_e\right)$ respectively
$G_i=\left(V,E_i\right)$ we define the network which only consists of the
explicit respectively implicit edges. The degree of a node $v$ in $G$ is equal
to its degree in $G_e$ (i.e., the number of explicit edges of the form
$(v,w)$), and the diameter of $G$ is equal to the diameter of $G_e$. We define
the specific state of the network at a time $t$ with $G\left(t\right),
E\left(t\right), E_e\left(t\right)$ and $E_i\left(t\right)$.

\subsubsection{Computational model}
\label{sec:ComputationalModel}

A {\em program} is composed of a set of variables and actions. An {\em action}
has the form $\langle label \rangle: \; \langle guard \rangle \rightarrow
\langle command \rangle$. $label$ is a name to differentiate between actions,
$guard$ can be an arbitrary predicate based on the state of the node executing
the action, and $command$ represents a sequence of commands. A $guard$ of the
form $received(m)$ is true whenever a message $m$ has been received (and not
yet processed) by the corresponding node. An action is {\em enabled} if its
guard is true and otherwise {\em disabled}.

The {\em node state} of the system is the combination of all node states, and
the {\em channel state} of the system is the combination of all channel
states. Both states together form the \emph{(program) state} of the system. A
{\em computation} is an infinite fair sequence of states such that for each
state $s_t$, the next state $s_{t+1}$ is obtained by executing the commands of
an action that is enabled in $s_t$. So for simplicity we assume that only one
action can be executed at a time, but our results would also hold for the
distributed scheduler, i.e., only one action can be executed {\em per node} at
a time. We assume two kinds of fairness of computation: weak fairness for
action execution and fair message receipt. {\em Weak fairness} of action
execution means that no action will be enabled without being executed for an
infinite number of states (i.e., no action will starve, and actions that are
enabled for an infinite number of states will be executed infinitely often).
{\em Fair message receipt} means that every message will eventually be
received (and therefore processed due to weak fairness).

\subsubsection{Topological self-stabilization}
\label{sec:Self-Stabilization}

Next we define topological self-stabilization, which goes back to the idea by
Dijkstra \cite{Dijkstra:1974:SSS:361179.361202} and is summarized by Schneider
\cite{Schneider:1993:SEL:151254.151256}. In the topological self-stabilization
problem we start with an arbitrary state (with a finite number of nodes,
channels, and messages) in which $G$ is weakly connected, and a {\em legal
state} is any state where the topology of $G_e$ has the desired form and the
information that the nodes have about their neighbors is correct. We assume
without loss of generality that $G$ is initially weakly connected, because if
not, then we would just focus on any of the weakly connected components of $G$
and would prove topological self-stabilization for that component. In order to
show topological self-stabilization, two properties need to be shown:

\begin{definition}[Convergence] \label{def:Convergence}
For any initial state in which $G$ is weakly connected, the system eventually
reaches a legal state.
\end{definition}

\begin{definition}[Closure] \label{def:Closure}
Whenever the system is in a legal state, then it is guaranteed to stay in a
legal state, provided that no faults or changes in the node properties (in our
case, the bandwidth) happen.
\end{definition}

\subsection{Related work}

Topological self-stabilization has recently attracted a lot of attention.
Various topologies have been considered such as simple line and ring networks
(e.g., \cite{Shaker:2005:SSR:1099548.1100573,Gall:2010:TCD:2128719.2128746}),
skip lists and skip graphs (e.g., \cite{Nor:2011:CSD:2050613.2050640,
Jacob:2009:DPT:1582716.1582741}), expanders
\cite{Dolev:2013:SDS:2451995.2452270}, the Delaunay graph
\cite{Jacob:2012:THT:2364637.2364951}, the hypertree
\cite{Dolev:2004:HSP:1025126.1025933}, and Chord
\cite{Kniesburges:2011:RSC:1989493.1989527}. Also a universal protocol for
topological self-stabilization has been proposed
\cite{Berns:2011:BSO:2050613.2050620}. However, none of these works consider
nodes with heterogeneous bandwidths.

Various network topologies have been suggested to interconnect nodes with
heterogeneous bandwidths. While
\cite{Nejdl:2003:SRC:775152.775229, Srivatsa:2004:SUP:1009385.1010039} do not
provide any formal guarantees and just evaluate their constructions via
experiments, \cite{Bhargava:2004:PDO:1007912.1007938,
Scheideler:2009:DOH:1575973.1576024} give formal guarantees, but (like for the
experimental papers) no self-stabilizing protocol has been proposed for these
networks.
There has also been extensive work on networks with heterogeneous bandwidths
in the context of streaming applications (see, e.g., \cite{MW12} and the
references therein) but the focus is more on coding schemes and optimization
problems, so it does not fit into our context.

Probabilistic approaches like ours have the advantage of better graph
properties (e.g., a logarithmic expansion) compared to deterministic variants
(e.g., \cite{Awerbuch:2004:HLD:982792.982836}).
\subsection{Our Contribution}

We modified the protocol proposed for the self-stabilizing Skip+ graph
\cite{Jacob:2009:DPT:1582716.1582741} to organize the nodes in a more
effective way using the same topology. However, the nodes are not ordered
according to their labels but according to their bandwidths. Due to this we
call our graph the HSkip+ graph. Improvements of our
construction over previous work are:

\begin{itemize}
 
\item We prove self-stabilization under the asynchronous message passing model
whereas in \cite{Jacob:2009:DPT:1582716.1582741} it was only shown for the
synchronous message passing model.

\item In simulations, our Skip+ protocol has basically the same
self-stabilization time as the original Skip+ protocol, but it spends
significantly less work (in terms of messages that are exchanged between the
nodes) in order to reach a legal state. Furthermore, our overlay is working
correctly under a churn rate of nearly 50\%.

\item When a node joins or leaves in a legal state, then the worst case work in
\cite{Jacob:2009:DPT:1582716.1582741} is $O(\log^4 n)$ w.h.p. whereas our
protocol just needs a worst case work of $O(\log^2 n)$ w.h.p., and a
worst-case time of $O(\log n)$ w.h.p. in the synchronous message passing
model, to get back to a legal state. The work and time bounds are on par with
the previously best (non-self-stabilizing) network of logarithmic diameter and
degree for peers with heterogeneous bandwidths
\cite{Scheideler:2009:DOH:1575973.1576024}.

\item Also the competitiveness concerning the congestion of arbitrary
routing problems in HSkip+ is on par with the previously best
(non-self-stabilizing) network of logarithmic diameter and degree for peers
with heterogeneous bandwidths \cite{Bhargava:2004:PDO:1007912.1007938}.
\end{itemize}

Hence, our HSkip+ construction combines the best results of both worlds
(self-stabilizing networks and scalable networks for heterogeneous nodes).

\subsection{Organization of the Paper}

This paper is structured as follows: In Section \ref{sec:Theory} we present
our topology and the associated self-stabilizing algorithm. We show its
convergence and closure. Furthermore, we look at the handling of external
dynamics and routing in our network. Section \ref{sec:Simulations} presents
our simulation results, especially the comparison of Skip+ and HSkip+.
Finally, we end the paper in Section \ref{sec:Conclusion} with a conclusion.

\section{Theoretical Analysis}
\label{sec:Theory}

\subsection{HSkip+ Topology}
\label{sec:Topology}

We now present the desired topology for our problem which we call
\emph{HSkip+}. It is the same as the \emph{Skip+} topology introduced by Jacob
et al. \cite{Jacob:2009:DPT:1582716.1582741}, which is based on skip
graphs \cite{Aspnes:2003:SG:644108.644170}, but the ordering of the nodes is
different. Instead of using fixed node labels for the ordering, the bandwidth
values are used. Also, new rules are used since they turned out to consume clearly less work than the rules proposed for \emph{Skip+}.

As stated in our network model (cf. Sec. \ref{sec:NetworkModel}), the system
forms a directed graph $G=(V,E)$. Each node $v \in V$ has several internal
variables which define the internal state of the node $v$:

\begin{itemize}

\item $v.id$ is the unique, immutable identifier of node $v$.

\item $v.rs$ is an immutable pseudo-random bit string of node $v$.

\item $v.bw$ is the current bandwidth of node $v$, which is modifiable during
the execution. W.l.o.g., we assume that all bandwidths are
unique (which is easy to achieve given that the node identifiers are unique).

\item $v.nh$ is the neighborhood of node $v$, i.e., the set of all nodes
whose references are stored in $v$.
\end{itemize}

We introduce some auxiliary functions for the internal variables, and
especially for the bit string $v.rs$:
 
\begin{itemize}

\item $\mathrm{prefix}(v, i) = v.rs[0 \ldots i-1]$ is the prefix of length $i$
of $v.rs$.

\item $\mathrm{commonPrefix}(v, w) = \mathrm{argmax}_i \{\mathrm{prefix}(v,i)$
$=\mathrm{prefix}(w,i)\}$ is the length of the maximal common prefix of $v.rs$
and $w.rs$.

\item $\mathrm{level}(v) = \max_{w \in v.nh}{\{\mathrm{commonPrefix}(v,w)\}}$
is called the current level of node $v$ and will be used later for the
grouping of the nodes.

\item $\mathrm{deg}(v) = \left|v.nh\right|$ is called the current degree of
node $v$, the number of neighbors.
\end{itemize}

As mentioned before, we are
aiming at maintaining the HSkip+ graph among the nodes. For the HSkip+ graph
we need a series of definitions.

\begin{definition}[Component of HSkip+] \label{def:Component}
Two nodes $v$ and $w$ belong to the same \emph{component} at level $i$ of
HSkip+ if their bit values $v.rs$ and $w.rs$ share the same prefix of length
$i$, formally $\mathrm{component}(v,i) = \left\{ w \in V \middle|
\mathrm{commonPrefix}(v,w) \geq i \right\}$. The nodes in each component are
ordered according to their bandwidths. A component is called \emph{trivial} if
there is only one node in the component.
\end{definition}

Components exist at each level as long as they are non-trivial, which means
there would be only one node in a component. Therefore, we can define the
level of HSkip+ as the number of levels needed to represent all non-trivial
components:

\begin{definition}[Level of HSkip+] \label{def:Level}
The \emph{level} of the network $G=(V,E)$ is defined by the maximum level $i$
such that two nodes share a common prefix of length $i$. Formally,
$\mathrm{level}(G) = \max_{v,w \in V, v \neq w}{\mathrm{commonPrefix}(v,w)}$.
\end{definition}

In addition to the regular linked list for each component we have further
edges to get a more stable neighborhood and to allow local checking of the
correctness. Each node $v$ is connected at level $i$ to at least one node
$w_0$ and one node $w_1$ which share the same prefix of length $i$ and have
the next bit as $0$ respectively $1$:

\begin{definition}[Farthest Neighbors of HSkip+]
We define the farthest predecessors of node $v$ as
\begin{eqnarray*}
  \mathrm{farthestPred}(v,i,b)
    & = & \mathrm{argmax}_{u \in \left\{u \in V \middle| u.bw > v.bw \right\}} \\
    & & {\left\{\mathrm{prefix}(v,i)\cdot b = \mathrm{prefix}(u,i+1)\right\}}
    \\
  \mathrm{farthestPred}(v,i)
    & = & \mathrm{argmax}_{u \in \left\{u \in V \middle| u.bw > v.bw
    \right\}} \\
    & & \hspace*{-1cm} {\min_{b\in \left\{0, 1\right\}}{\left\{\mathrm{prefix}(v,i)\cdot b =
      \mathrm{prefix}(u,i+1)\right\}}}
\end{eqnarray*}
and the farthest successors as
\begin{eqnarray*}
  \mathrm{farthestSucc}(v,i,b)
  & = & \mathrm{argmin}_{w \in \left\{w \in V \middle|
    w.bw < v.bw \right\}} \\
  & & {\left\{\mathrm{prefix}(v,i)\cdot b = \mathrm{prefix}(w,i+1)\right\}} \\
  \mathrm{farthestSucc}(v,i)
  & = & \mathrm{argmin}_{w \in \left\{w \in V \middle|
    w.bw < v.bw \right\}} \\
  & & \hspace*{-1cm} {\max_{b\in \left\{0, 1\right\}}{\left\{\mathrm{prefix}(v,i)\cdot b =
    \mathrm{prefix}(w,i+1)\right\}}}
\end{eqnarray*}
\end{definition}

Also, all nodes between the farthest predecessor and successor in each level are connected. This property aims at a stable
neighborhood which prepares the next higher level as the linked list of level
$i+1$ is already available. Formally we can define the neighborhood range at
level $i$ with the help of the components:

\begin{definition}[Range of HSkip+] \label{def:Range}
The node $v$ is connected at level $i$ to all nodes $w \in
\mathrm{component}(v,i)$ with $w.bw \leq \mathrm{farthestPred}(v,i).bw$ and
$w.bw \geq \mathrm{farthestSucc}(v,i).bw$, which we call the \emph{range} or
\emph{neighbors} of node $v$ at level $i$.
\end{definition}

Furthermore, we define different neighborhood shortcuts:

\begin{definition}[Neighbors of HSkip+]
Let
\begin{itemize}
\item $\mathrm{preds}(v,i) = \left\{u \in \mathrm{neighbors}(v,i) \middle| u.bw
  > v.bw \right\}$
\item $\mathrm{succs}(v,i) = \left\{w \in \mathrm{neighbors}(v,i) \middle| w.bw
  < v.bw \right\}$
\item $\mathrm{closestPred}(v,i) = \mathrm{argmin}_{u \in \mathrm{preds(v,i)}}
  {u.bw}$
\item $\mathrm{closestSucc}(v,i) = \mathrm{argmax}_{w \in \mathrm{succs(v,i)}}
  {w.bw}$
\end{itemize}
\end{definition}

With the three definitions of components, level and range (cf. Def.
\ref{def:Component}, \ref{def:Level} and \ref{def:Range}) in the HSkip+
topology we can now define the desired network. See Fig.
\ref{fig:ExampleNetwork} for an example.

\begin{definition}[HSkip+]
    \label{def:HSkip+}
    The HSkip+ network is defined by $G^{HSkip+} = \left(V, E^{HSkip+}\right)$ with
    \[E^{HSkip+}=\left\{(v,w) \middle| \exists i \in [0, \mathrm{level}]: w \in \mathrm{range}(v, i) \right\}\]
\end{definition}

\begin{figure}[htb]
    \centering
    \includegraphics[scale=.5]{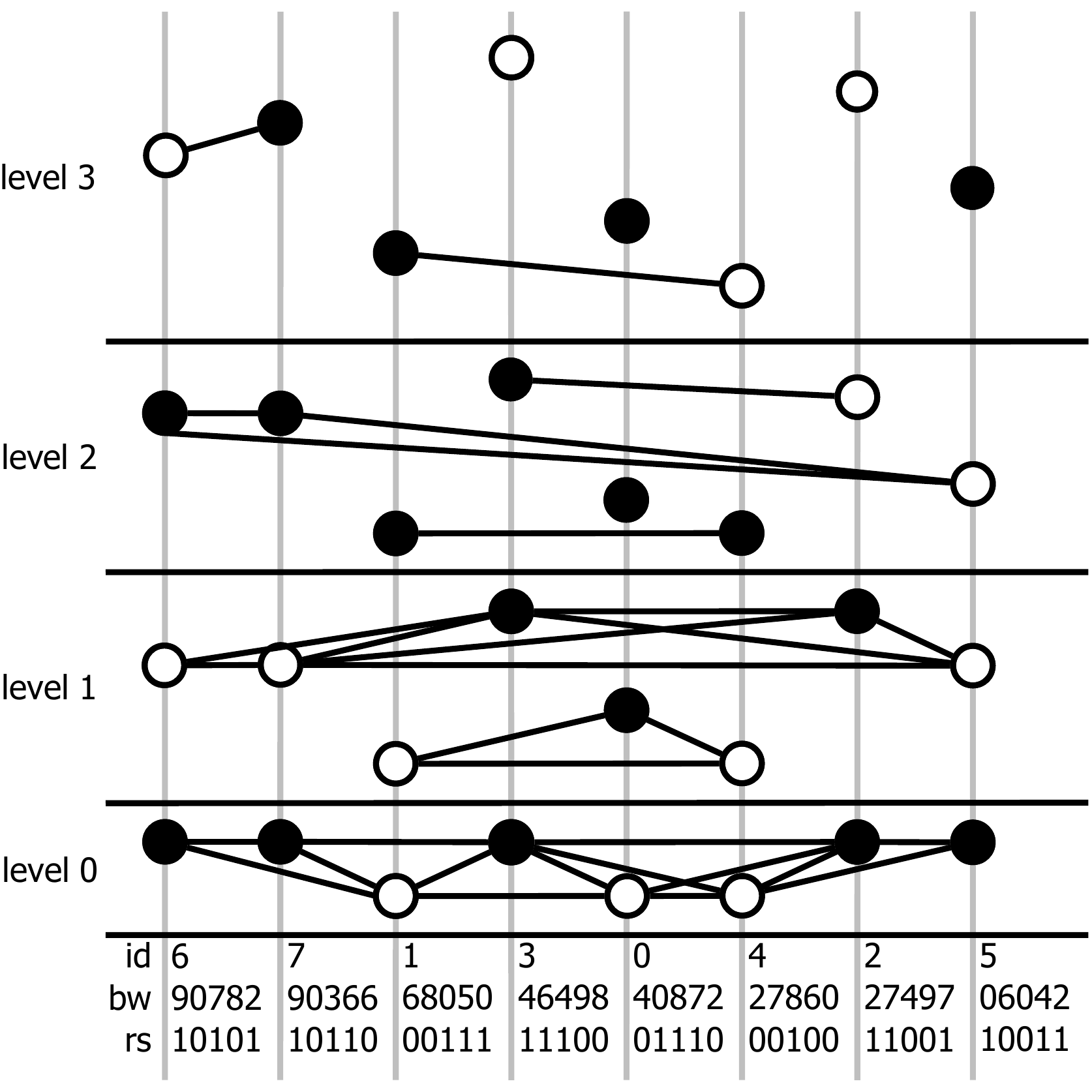}
    \caption[Example for the HSkip+ Network]{Example for the HSkip+ topology with eight nodes.}
    \label{fig:ExampleNetwork}
\end{figure}

Figure~\ref{fig:ExampleNetwork} shows an example network with 8 nodes and 4 levels. The image visualizes the edges caused by the different levels in the HSkip+ topology.

\subsection{HSkip+ Algorithm}
\label{sec:Algorithm}

To reach the presented topology, we now introduce a self-stabilizing algorithm
whose correctness we will prove afterwards. All following operations are
executed at node $v$. The algorithm works in the asynchronous message passing
model presented in Section \ref{sec:Introduction}. We just use two types of
guards: $true$ and $received(m)$. $true$ means that the action is continuously
enabled, which implies that it is executed infinitely often. In addition to the definitions in Sec.~\ref{sec:Topology} we take use of local variants (e. g. $\mathrm{localFarthestPred}(v,i)$) which represent the cached information at node $v$.

\begin{figure}[htb]
    \fbox{\parbox{0.975\columnwidth}{
    \begin{algorithmic}\footnotesize
        \State $true \rightarrow$
        \State CheckNeighborhood()
        \State IntroduceNode()
        \State IntroduceClosestNeighbors()
        \State LinearizeNeighbors()
    \end{algorithmic}
    }}
    \caption[Periodic Actions]{Action called periodically by node $v$.}
    \label{alg:PeriodicActions}
\end{figure}

The \emph{CheckNeighborhood()} function checks if all nodes which are in the
neighborhood of the node $v$ are needed for the topology (cf. Fig.
\ref{alg:CheckNeighborhood}). If a node $w$ is no longer needed in any level,
it is removed from the neighborhood of $v$ and forwarded to another node $x$,
concretely the node in the neighborhood with the longest common prefix,
because it is the most promising node which could include the node in its own
neighborhood at any level.

\begin{figure}[htb]
    \fbox{\parbox{0.975\columnwidth}{
    \begin{algorithmic}\footnotesize
        \Function{CheckNeighborhood}{}
            \ForAll{$w \in v.nh$}
                \If{CheckNode($w$) = $false$}
                    \State $v.nh = v.nh \backslash \{w\}$
                    \State send $m=(\mathrm{\emph{build}}, w)$\\
                    \hspace{\algorithmicindent}\hspace{\algorithmicindent}\hspace{\algorithmicindent}\hspace{\algorithmicindent} to node $\mathrm{argmax}_{x\in v.nh}{\mathrm{commonPrefix}(w,x)}$
                \EndIf
            \EndFor
        \EndFunction
    \end{algorithmic}
    }}
    \caption[CheckNeighborhood()]{\emph{CheckNeighborhood()} inspects the nodes in $v.nh$.}
    \label{alg:CheckNeighborhood}
\end{figure}

The test if a node $w$ is really needed for node $v$ is done by the
\emph{CheckNode()} function (cf. Fig. \ref{alg:CheckNode}). For the node $w$
in each level $i$ it is checked if the node $w$ is in the range (cf. Def.
\ref{def:Range}) of node $v$. If this is the case for at least one level, the
node $w$ is needed in the neighborhood of $v$.


\begin{figure}[htb]
    \fbox{\parbox{0.975\columnwidth}{
    \begin{algorithmic}\footnotesize
        \Function{CheckNode}{node $w$}
            \State $needed \leftarrow false$
            \For{$i=0 \text{ to } \mathrm{level}(v)$}
                \If{$\mathrm{prefix}(v,i) = \mathrm{prefix}(w,i)$\\
                    \hspace{\algorithmicindent}\hspace{\algorithmicindent}\hspace{\algorithmicindent}$\wedge (\mathrm{localFarthestPred}(v,i) = \bot$\\
                    \hspace{\algorithmicindent}\hspace{\algorithmicindent}\hspace{\algorithmicindent}\hspace{\algorithmicindent}$\vee w.bw \leq \mathrm{localFarthestPred}(v,i).bw)$\\
                    \hspace{\algorithmicindent}\hspace{\algorithmicindent}\hspace{\algorithmicindent}$\wedge (\mathrm{localFarthestSucc}(v,i) = \bot$\\
                    \hspace{\algorithmicindent}\hspace{\algorithmicindent}\hspace{\algorithmicindent}\hspace{\algorithmicindent}$\vee w.bw \geq \mathrm{localFarthestSucc}(v,i).bw)$\\\hspace{\algorithmicindent}\hspace{\algorithmicindent}}
                    \State $needed \leftarrow true$
                \EndIf
            \EndFor
            \State\Return $needed$
        \EndFunction
    \end{algorithmic}
    }}
    \caption[CheckNode()]{\emph{CheckNode()} tests if a node $w$ is needed in $v.nh$.}
    \label{alg:CheckNode}
\end{figure}

\begin{figure}[htb]
    \fbox{\parbox{0.975\columnwidth}{
    \begin{algorithmic}\footnotesize
        \Function{IntroduceNode}{}
            \ForAll{$w \in v.nh$}
                \State send $m=(\mathrm{\emph{build}}, v)$ to node $w$
            \EndFor
        \EndFunction
    \end{algorithmic}
    }}
    \caption[IntroduceNode()]{\emph{IntroduceNode()} introduces the node $v$ to all neighbors.}
    \label{alg:IntroduceNode}
\end{figure}

\begin{figure}[htb]
    \fbox{\parbox{0.975\columnwidth}{
    \begin{algorithmic}\footnotesize
        \Function{IntroduceClosestNeighbors}{}
            \For{$i = 0 \text{ to } \mathrm{level}(v)$}
                \If{$\mathrm{localClosestPred}(v,i) \neq \bot$}
                    \ForAll{$w \in \mathrm{localNeighbors}(v,i)$}
                        \State send $m=(\mathrm{\emph{build}}, \mathrm{localClosestPred}(v,i))$\\
                        \hspace{\algorithmicindent}\hspace{\algorithmicindent}\hspace{\algorithmicindent}\hspace{\algorithmicindent}\hspace{\algorithmicindent} to node $w$
                    \EndFor
                \EndIf
                \If{$\mathrm{localClosestSucc}(v,i) \neq \bot$}
                    \ForAll{$w \in \mathrm{localNeighbors}(v,i)$}
                        \State send $m=(\mathrm{\emph{build}}, \mathrm{localClosestSucc}(v,i))$\\
                        \hspace{\algorithmicindent}\hspace{\algorithmicindent}\hspace{\algorithmicindent}\hspace{\algorithmicindent}\hspace{\algorithmicindent} to node $w$
                    \EndFor
                \EndIf
            \EndFor
        \EndFunction
    \end{algorithmic}
    }}
    \caption[IntroduceClosestNeighbors()]{\emph{IntroduceClosestNeighbors()} introduces the closest neighbors.}
    \label{alg:IntroduceClosestNeighbors}
\end{figure}

\begin{figure}[htb]
    \fbox{\parbox{0.975\columnwidth}{
    \begin{algorithmic}\footnotesize
        \Function{LinearizeNeighbors}{}
            \For{$i = 0 \text{ to } \mathrm{level}(v)$}
                \For{$j=0 \text { to } \left|\mathrm{localPredecessors}(v,i)\right| - 1$}
                    \State send $m=(\mathrm{\emph{build}}, \mathrm{localPredecessors}(v,i)[j+1])$\\\hspace{\algorithmicindent}\hspace{\algorithmicindent}\hspace{\algorithmicindent}\hspace{\algorithmicindent} to $\mathrm{localPredecessors}(v,i)[j]$
                \EndFor
                \For{$j=0 \text { to } \left|\mathrm{localSuccessors}(v,i)\right| - 1$}
                    \State send $m=(\mathrm{\emph{build}}, \mathrm{localSuccessors}(v,i)[j+1])$\\
                    \hspace{\algorithmicindent}\hspace{\algorithmicindent}\hspace{\algorithmicindent}\hspace{\algorithmicindent} to $\mathrm{localSuccessors}(v,i)[j]$
                \EndFor
            \EndFor
        \EndFunction
    \end{algorithmic}
    }}
    \caption[LinearizeNeighbors()]{\emph{LinearizeNeighbors()} introduces the neighbors to each other.}
    \label{alg:LinearizeNeighbors}
\end{figure}

The other three periodic actions introduce new neighbors to each other to
reach the desired topology. In the first function \emph{IntroduceNode()}, node
$v$ introduces itself to all of its neighbors to create backward edges (cf.
Fig. \ref{alg:IntroduceNode}). In the second function
\emph{IntroduceClosestNeighbors()}, node $v$ introduces the two direct
neighbors, the closest predecessor and the closest successor, in each level to
all other neighbors in this level (cf. Fig.
\ref{alg:IntroduceClosestNeighbors}). The last function
\emph{LinearizeNeighbors()} linearizes the neighborhood (cf. Fig.
\ref{alg:LinearizeNeighbors}), i.e., each neighbor is introduced to the
subsequent neighbor.

In addition to the periodic actions we have different reactive actions
triggered by the $received(m)$ guard. Depending on the message type of the
message $m$ received by a node $v$, the \emph{Build}, \emph{Remove} or
\emph{Lookup} operation is called (cf. Fig. \ref{alg:MessageHandling}).

\begin{figure}[htb]
    \fbox{\parbox{0.975\columnwidth}{
    \begin{algorithmic}\footnotesize
        \State {$received(m)$} $\rightarrow$
        \If {$m=(\mathrm{\emph{build}}, x)$}
            \State $Build(x)$
        \ElsIf{$m=(\mathrm{\emph{remove}}, x)$}
            \State $Remove(x)$
        \ElsIf {$m=(\mathrm{\emph{lookup}},x)$}
            \State $Lookup(x)$
        \EndIf
    \end{algorithmic}
    }}
    \caption[Message handling]{The incoming messages at node $v$ are handled.}
    \label{alg:MessageHandling}
\end{figure}

The \emph{Build()} function of a node $v$ must handle two cases (cf. Fig.
\ref{alg:Build}): The node $x$ which is given as argument is already in the
neighborhood or not. If it is already included, its information (especially
its current bandwidth) is updated. Then the \emph{CheckNeighborhood()}
operation is called to check if the modified node $x$ and all other nodes are
still needed for the node. In the case that the node $x$ is not yet in the
neighborhood, it is checked by the \emph{CheckNode()} function (cf. Fig.
\ref{alg:CheckNode}) if the node should be integrated in the neighborhood. If
the node is needed in one level, it will be included in the neighborhood and
the whole neighborhood will be checked if there is an unnecessary node now. If
the new node is not needed by the node, it will be forwarded to the next node
$w$ with the longest common prefix.

\begin{figure}[htb]
    \fbox{\parbox{0.975\columnwidth}{
    \begin{algorithmic}\footnotesize
        \Function{Build}{node $x$}
            \If{$x \in v.nh$}
                \State update neighbor information
                \State CheckNeighborhood()
            \Else
                \State $v.nh = v.nh \cup \{x\}$
                \If{CheckNode($x$) = $true$}
                    \State CheckNeighborhood()
                \Else
                    \State $v.nh = v.nh \backslash \{x\}$
                    \State send $m=(\mathrm{\emph{build}}, x)$\\
                    \hspace{\algorithmicindent}\hspace{\algorithmicindent}\hspace{\algorithmicindent}\hspace{\algorithmicindent} to node $\operatorname{argmax}_{w\in v.nh}{\mathrm{commonPrefix}(x,w)}$
                \EndIf
            \EndIf
        \EndFunction
    \end{algorithmic}
    }}
    \caption[Build()]{\emph{Build} handles an incoming \emph{build} message.}
    \label{alg:Build}
\end{figure}

With these self-stabilizing rules the desired topology can be reached and maintained in a finite number of steps.

\subsection{Correctness}
\label{sec:Correctness}

Next we show the convergence and the closure of the algorithm, which implies
that it is a self-stabilizing algorithm for the HSkip+ topology. Most of the
proofs are skipped due to space constraints, but we dissected the
self-stabilization process into sufficiently small pieces so that it is not
too difficult to verify them with the help of the protocols.

\subsubsection{Convergence}

Here, we prove the following theorem:

\begin{theorem}[Convergence] \label{theorem:Convergence}
If $G(t)=(V,E)$ is weakly connected at time $t$, then $G_e(t')=G^{HSkip+}$ for
some time $t'>t$ and all node information is correct.
\end{theorem}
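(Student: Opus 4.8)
The plan is to prove convergence in two stages: first establish that the system never \emph{loses} weak connectivity and that a certain potential quantity is monotone, then show that it must reach the legal state. I would begin by arguing that the \emph{build} message is ``connectivity-preserving'': whenever a node $v$ drops a reference $w$ from $v.nh$ (in \emph{CheckNeighborhood} or in \emph{Build}), it simultaneously sends $(\mathrm{build}, w)$ to some other node $x \in v.nh$, so the edge $(v,w)$ is replaced by a path $v \to x \dashrightarrow w$ in $G = G_e \cup G_i$. Likewise, \emph{IntroduceNode}, \emph{IntroduceClosestNeighbors}, and \emph{LinearizeNeighbors} only \emph{add} implicit edges between nodes that are already reachable from each other. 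Hence $G(t)$ stays weakly connected for all $t' \ge t$; this is the analogue of the standard invariant in linearization-style self-stabilization proofs.

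Next I would set up a potential/progress argument. The natural measure is lexicographic over levels $i = 0, 1, \dots$: at level $0$ the whole graph is one component, and the target is that the explicit edges restricted to the ``range'' at level $0$ form exactly the doubly-linked list ordered by bandwidth (plus the farthest-neighbor edges). I would show that once the level-$i$ structure is correct and stable, the periodic actions force the level-$(i+1)$ structure to become correct: the key observation is that \emph{LinearizeNeighbors} acts as a sorting network on the bandwidth order within each component (introducing consecutive neighbors to each other monotonically shrinks the gap between a node and its true closest predecessor/successor), while \emph{IntroduceClosestNeighbors} and \emph{IntroduceNode} propagate the closest-neighbor information far enough to reconstruct the full \emph{range} as defined in Def.~\ref{def:Range}. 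Using weak fairness of action execution and fair message receipt, every enabled action fires infinitely often and every \emph{build} message is eventually delivered, so no required edge can be ``stuck'' indefinitely. The \emph{CheckNode}/\emph{CheckNeighborhood} logic guarantees that an edge that \emph{is} in some range is never permanently discarded (it may bounce but each bounce strictly increases the common-prefix length with the forwarding target, which is bounded), so superfluous edges are eventually cleaned up without destroying needed ones — this gives the ``only the HSkip+ edges survive'' direction.

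The correctness-of-information part I would fold in by noting that every \emph{build} message carries the referenced node's current bandwidth and identifier, and \emph{Build} updates a stale neighbor entry whenever such a message arrives; since every node runs \emph{IntroduceNode} infinitely often toward all its neighbors, any incorrect cached $w.bw$ held by $v$ is eventually overwritten with the true value, after which the range computations at $v$ become accurate.

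The main obstacle I expect is the induction step from level $i$ to level $i+1$: one has to show that the local caches (\emph{localFarthestPred}, \emph{localClosestSucc}, etc.) converge to their true global values in the right order, because \emph{CheckNode} prunes edges based on possibly-stale local farthest-neighbor information, so a premature pruning at a high level could temporarily remove an edge that is genuinely needed there. Handling this requires a careful argument that such an erroneously pruned edge is always re-created — via the forwarding rule plus the periodic introductions from a correctly-structured lower level — before it can cause a cascade, i.e., that the ``damage'' is confined to finitely many rounds. I would structure this as a sequence of small lemmas (connectivity invariant; level-by-level stabilization of the linked list; stabilization of the farthest/closest caches; elimination of spurious edges; correctness of bandwidth information), which matches the paper's stated decomposition of the proof into small pieces.
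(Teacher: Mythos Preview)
Your proposal follows essentially the same route as the paper: a connectivity-preservation invariant, then a level-by-level induction (linked list at level~$0$ $\Rightarrow$ HSkip+ range at level~$0$ $\Rightarrow$ linked list at level~$1$ $\Rightarrow$ \dots), with maintenance lemmas ensuring each level, once correct, stays correct. The decomposition you sketch at the end lines up almost one-to-one with the paper's Lemmas~\ref{lemma:WeaklyConnected}--\ref{lemma:HSkipInduction}.

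The one structural difference worth noting is the placement of the ``correct node information'' step. You propose to fold it in late, and you then (correctly) identify the resulting obstacle: \emph{CheckNode} uses possibly-stale local caches, so premature pruning could happen and you would need an argument that such damage is always repaired. The paper avoids this difficulty by proving information correctness \emph{first} (Lemma~\ref{lemma:WrongInformation}), immediately after connectivity and before any structural lemma: since \emph{IntroduceNode} periodically pushes fresh $(id,bw)$ data to every neighbor and forwarded messages strictly increase common-prefix length, all stale cached values are eventually overwritten; from that point on every range computation is accurate and the pruning concern disappears. Doing it in this order buys a cleaner induction at the cost of one extra up-front lemma; your order is not wrong, but it forces you to carry the repair argument through every level, which is exactly the complication you flagged.
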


\begin{proof}
To prove this theorem we will show different lemmas which lead to the
convergence. But first, we give an overview of the whole proof:
Starting with any weakly connected graph we first show that the network stays
connected over time (cf. Lem. \ref{lemma:WeaklyConnected}). Furthermore, we
show that all wrong information about nodes in the network will be removed
over time (cf. Lem. \ref{lemma:WrongInformation}). Having reached this state,
we will show the creation of the correct linked list at the bottom level (cf.
Lem. \ref{lemma:ListCreation}) and the maintenance of it (cf. Lem.
\ref{lemma:ListMaintenance}). The next step is the proof of the creation of
the HSkip+ topology at the bottom level (cf. Lem. \ref{lemma:HSkipCreation})
and its maintenance (cf. Lem. \ref{lemma:HSkipMaintenance}). By showing the
inductive creation of the HSkip+ topology at all levels (cf. Lem.
\ref{lemma:HSkipInduction}) we can finish the convergence proof.

Note that edges are only deleted if there already exist other edges concerning the desired topology. Otherwise, edges are
only added or delegated in a sense that a node $u$ holding an identifier of
$v$ may forward that identifier to one of its neighbors $w$, which also
preserves connectivity. Hence, we get:

\begin{lemma}[Weakly Connected] \label{lemma:WeaklyConnected}
If $G(t)=(V,E)$ is weakly connected at time $t$, then $G(t')$ is weakly
connected at any time $t'>t$.
\end{lemma}
\begin{proof}
To show the connectivity, we have to prove that for all edges $(x,y) \in E(t)$ there is the same edge in the next time step, $(x, y) \in E(t+1)$, or there is a path from $x$ to $y$ which uses other edges. If the edge is still available in the next time step, $G(t+1)$ obviously stays weakly connected, so we only have to consider the case that $(x,y) \notin E(t+1)$. Here we distinguish between two different cases: Either the removed edge was an explicit edge or it was an implicit edge at time $t$.

So let us first consider the case where $(x,y)$ is an explicit edge in the graph and therefore $y$ is in the neighborhood of $x$. In our algorithm there is only one operation which removes nodes from the neighborhood: The \emph{CheckNeighborhood()} function removes nodes which are not needed for the correct neighborhood. But in this case, a message with the removed node $y$ is sent by the current node $x$ to another node $v$ in its neighborhood (the one with the longest common prefix). Therefore, we have a new implicit edge $(v, y)$ and together with the existing explicit edge $(x, v)$ as $v$ is in the neighborhood of $x$, there exists a path from $x$ to $y$ over $v$.

We now look at the case where $(x,y)$ is an implicit edge at time $t$ which means that we have the reference to node $y$ in an incoming \emph{Build} message at node $x$. The message is handled by the \emph{Build()} operation and two cases can occur. On the one hand, the node $y$ can be useful in the neighborhood of $x$, then $y$ is integrated in the neighborhood and at time $t+1$, $y \in x.nh$. So we have an explicit edge instead of the implicit one. On the other hand, the node $y$ can be useless for the node $x$. Then $y$ is delegated to another node $v$ in its neighborhood (the one with the longest common prefix) and $x$ stays connected with $y$ over the node $v$ and the explicit edge $(x, v)$ ($v$ is in the neighborhood of $x$) as well as the new implicit edge $(v, y)$.
\end{proof}

Due to our FIFO delivery and fair message receipt assumption, all messages
initially in the system will eventually be processed. Moreover, the distance
(w.r.t. the sorted ordering of the nodes) of wrong information in the network can be shown to
monotonically decrease. Once it is equal to one, the periodic self-introduction
(cf. Figure 5) ensures that the node information is corrected, which implies
the following lemma.

\begin{lemma}[Wrong Information] \label{lemma:WrongInformation}
If $G(t)=(V,E)$ is weakly connected at time $t$, then there is a time point
$t'$ at which all node information in $G(t'')$ is correct for any $t''>t'$.
\end{lemma}

\begin{proof}
We assume, that there is a wrong information in the network. This wrong information can be contained in a message or in an internal variable of a node.
A message is only delegated a finite number of steps because it decreases the distance to its target in each step by choosing as next hop a node with a longer common prefix. Eventually, the information is integrated in the internal variables of a node, so we only have to consider this case. Therefore at a time $\hat{t}$, there is no wrong information in any message, but there can be wrong information in the internal variables of a node.

Assume that a node $x$ has wrong information about node $y$ at time $\hat{t}$. If node $x$ is connected to node $y$ by an explicit edge, eventually also node $y$ is connected to node $x$ because of the periodic call of \emph{IntroduceNode()}. Periodically, $y$ sends a \emph{build} message to $x$ in the \emph{IntroduceNode()} function. This message is handled by the \emph{Build()} operation in the first case ($y \in x.nh$). The information about $y$ in the internal variables of $x$ is updated and all wrong information is removed.

Furthermore, no new wrong information is produced in the network without any operations from outside, as only correct information is sent. Therefore at a time $t'>t$ we have only correct information in the whole network.
\end{proof}

From now on, let us only consider time steps in which all node information is
correct. Next to the edge set of the desired topology $E^{HSkip+}$ we define the edge set $E^{HSkip+_i}$ which contains all edges belonging to the topology at level $i$ and the edge set $E^{list_i}$ which contains all edges needed for a linked list at level $i$. Then it follows from arguments in
\cite{Jacob:2009:DPT:1582716.1582741} for the edge set $E^{list_0}$ at level
0:

\begin{lemma}[List Creation] \label{lemma:ListCreation}
If $G(t)=(V,E)$ is weakly connected, then $E_e(t') \supseteq E^{list_0}$ at
some time $t'>t$.
\end{lemma}

\begin{proof}
As the graph is weakly connected at time $t$, there exists an undirected path from $v$ to $w$ for all pairs $(v,w) \in V^2$. Now we look at two nodes $v$ and $w$ which should be direct neighbors: If $v$ and $w$ are already direct neighbors regarding their bandwidth values, then we only need one periodic message of the \emph{IntroduceNode()} operation and we have the linked list completed at this point.

\begin{figure}[htb]
   \centering
   \includegraphics[scale=.25]{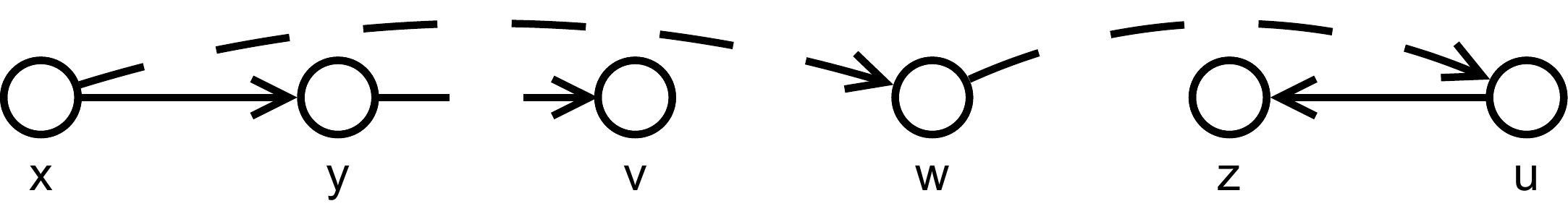}
   \caption[List Creation: Starting Path]{Path which connects $v$ with $w$ over intermediate nodes $y$, $x$, $u$, $z$.}
   \label{fig:ListCreation1}
\end{figure}

Let us now assume without loss of generality, that the path from $v$ to $w$ is through other nodes (cf. Fig. \ref{fig:ListCreation1}). We will show that the length of the path from $v$ to $w$ does not increase and finally becomes shorter that in the end $v$ and $w$ are directly connected. At first, we will show that the length of the path does not increase: For each edge $(x,y)$ in the path it yields, that the edge is only replaced by two edges which are in the range of $(x,y)$. We distinguish two cases:

\begin{figure}[htb]
   \centering
   \subfloat[][$x$ and $y$ are connected with an implicit edge $(x,y)$.]{
       \includegraphics[scale=.25]{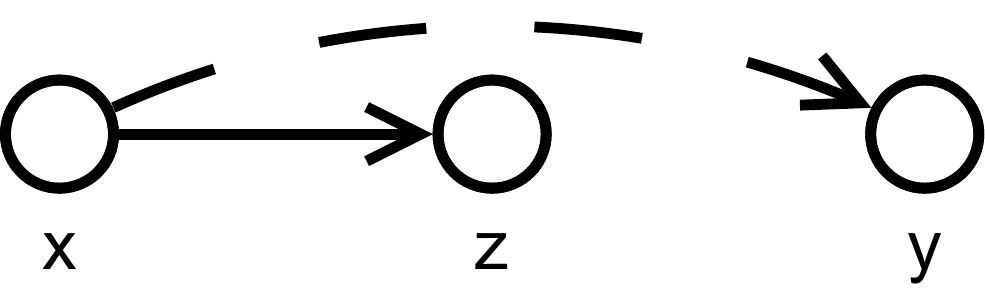}
       \label{subfig:ListCreation2}
   }
   \qquad
   \subfloat[][$x$ and $y$ are connected with an explicit edge $(x,y)$.]{
       \includegraphics[scale=.25]{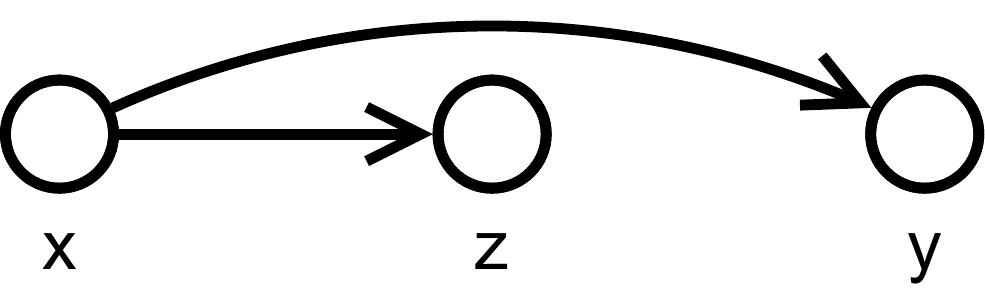}
       \label{subfig:ListCreation3}
   }

   \caption[List Creation: Range]{Edge $(x,y)$ is only replaced by two edges that are within the range of $(x,y)$.}
   \label{fig:ListCreationRange}
\end{figure}

\begin{enumerate}
   \item $(x,y)$ is an implicit edge (cf. Fig. \ref{subfig:ListCreation2}):\\
   If the edge $(x,y)$ is not needed by $x$, there has to be a closer node $z$ in the neighborhood of $x$. As consequence, $y$ can be delegated to $z$ and the edge $(x,y)$ is replaced by the two edges $(x,z)$ and $(z,y)$ which fulfills the condition to stay in the range.
   \item $(x,y)$ is an explicit edge (cf. Fig. \ref{subfig:ListCreation3}):\\
   The edge $(x,y)$ is only replaced if $x$ has a closer neighbor $z$ in its neighborhood. But than $y$ can be delegated to $z$ and the edge $(x,y)$ is replaced by the two edges $(x,z)$ and $(z,y)$ which fulfills the condition to stay in the range.
\end{enumerate}

If $x$ has no further outgoing edge $(x,z)$, the edge $(x,y)$ is kept or integrated as explicit one since $x$ does not know about closer neighbors.

We now show that border nodes in the path (the leftmost and rightmost nodes) will eventually be eliminated. Therefore, we assume that the path from $v$ to $w$ contains a border node $x$. Without loss of generality, we assume that $x$ has the maximum bandwidth in the path, $x.bw > v.bw > w.bw$. There are various cases, how $x$ is connected with two edges to the path: Explicit or implicit, the edges may point to $x$ or away from $x$. The undirected path from $v$ to $w$ contains somewhere the successive sequence of the nodes $y$, $x$ and $z$ (w.l.o.g. $x.bw > y.bw >  z.bw$). We want to exclude node $x$, so that the path only uses the nodes $y$ and $z$. We will now consider theses cases one by one:

\begin{figure}[htb]
   \centering
   \subfloat[][Explicit edge $(x,y)$ and explicit edge $(x,z)$.]{
       \includegraphics[scale=.25]{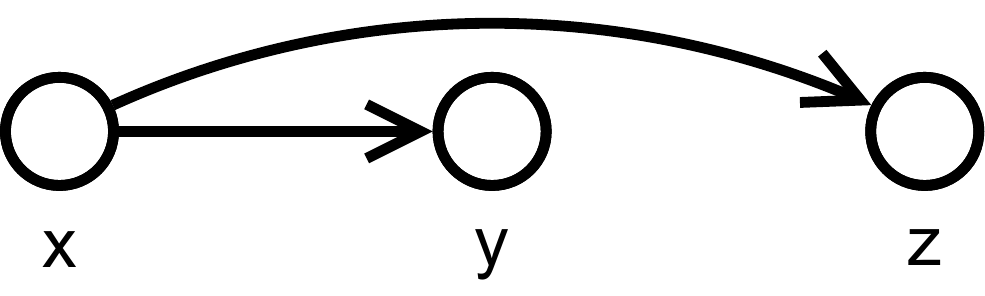}
       \label{subfig:ListCreation4}
   }
   \qquad
   \subfloat[][Explicit edge $(x,y)$ and implicit edge $(x,z)$.]{
       \includegraphics[scale=.25]{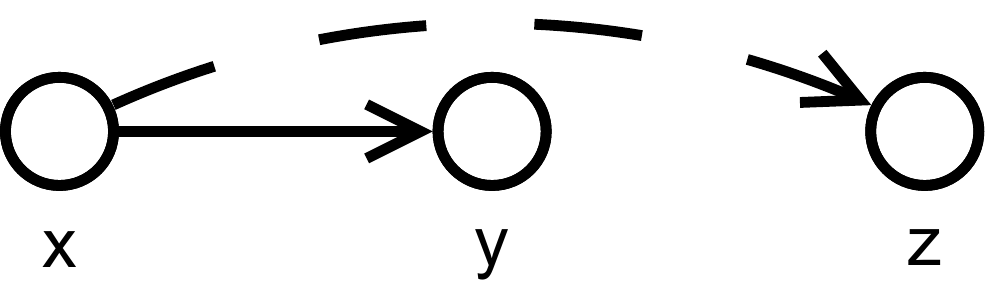}
       \label{subfig:ListCreation5}
   }
   \qquad
   \subfloat[][Implicit edge $(x,y)$ and explicit edge $(x,z)$.]{
       \includegraphics[scale=.25]{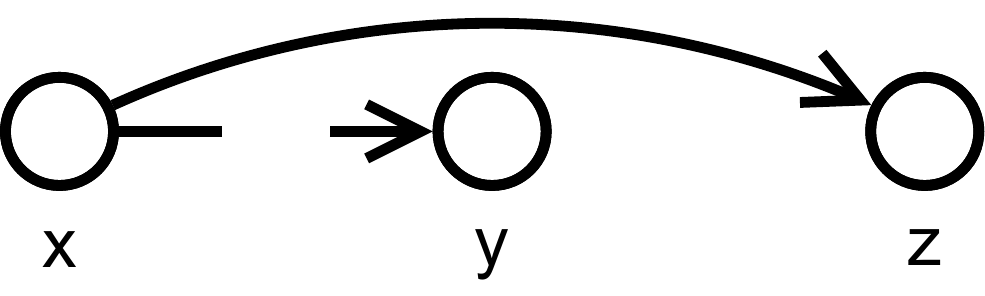}
       \label{subfig:ListCreation6}
   }
   \qquad
   \subfloat[][Implicit edge $(x,y)$ and implicit edge $(x,z)$.]{
       \includegraphics[scale=.25]{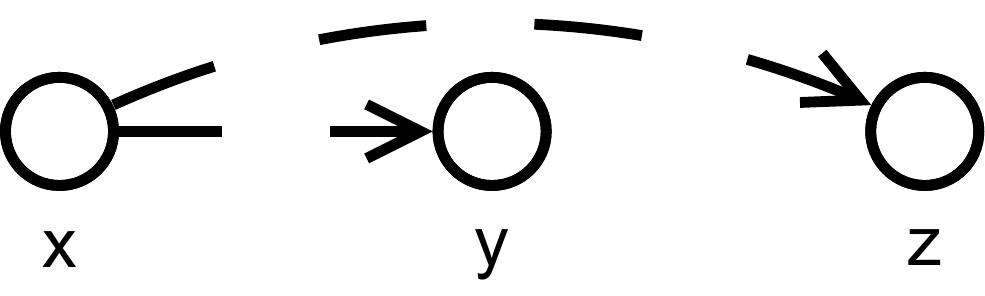}
       \label{subfig:ListCreation7}
   }
   \caption[List Creation: Border Node]{The border node $x$ can be excluded from the path.}
   \label{fig:ListCreationSimpleCases}
\end{figure}

   \begin{enumerate}
       \item[1a)] $x$ has explicit edges to $y$ and to $z$ (cf. Fig. \ref{subfig:ListCreation4}):\\
       With the \emph{LinearizeNeighbors()} operation of node $x$, a new implicit edge $(y, z)$ is added since $y$ and $z$ are successive neighbors of $x$. We have a direct connection between $y$ and $z$ and $x$ can be excluded.
       \item[1b)] $x$ has an explicit edge to $y$ and an implicit one to $z$ (cf. Fig. \ref{subfig:ListCreation5}):\\
       Node $x$ integrates $z$ in its neighborhood, as it needs at least two successors. The edge $(x,z)$ is converted to an explicit edge and we have case a.
       \item[1c)] $x$ has an implicit edge to $y$ and an explicit edge to $z$ (cf. Fig. \ref{subfig:ListCreation6}):\\
       As in the previous case, Node $x$ integrates $y$ in its neighborhood since it needs at least two successors. The edge $(x,y)$ is also converted to an explicit one and we have case a.
       \item[1d)] $x$ has implicit edges to $y$ and to $z$ (cf. Fig. \ref{subfig:ListCreation7}):\\
       Depending on the order of processing, $y$ or $z$ is integrated in the neighborhood of $x$ since it has no successor. As consequence this case is reduced to case b or c.
   \end{enumerate}

Additionally, we have to consider the cases in which there is another node $u$ to which node $x$ has an explicit edge to:

\begin{figure}[htb]
   \centering
   \subfloat[][$u$ is the closest successor of $x$.]{
       \includegraphics[scale=.25]{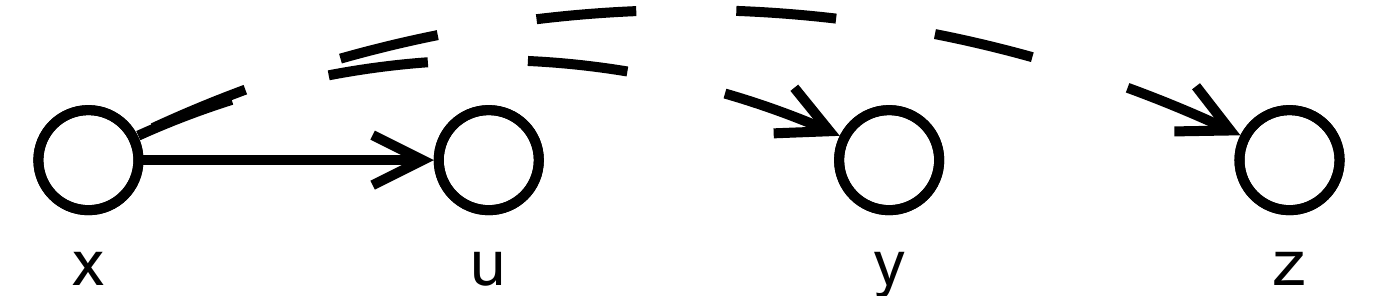}
       \label{subfig:ListCreation8}
   }
   \qquad
   \subfloat[][$u$ is the intermediate successor of $x$.]{
       \includegraphics[scale=.25]{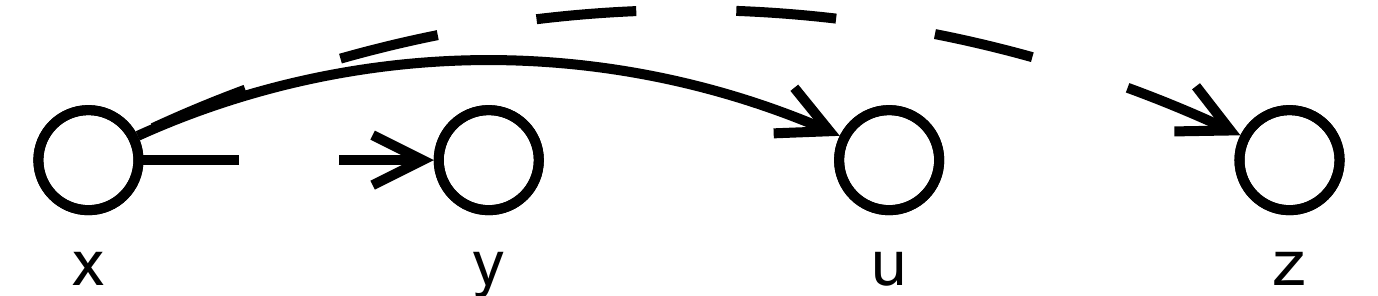}
       \label{subfig:ListCreation9}
   }
   \qquad
   \subfloat[][$u$ is the farthest successor of $x$.]{
       \includegraphics[scale=.25]{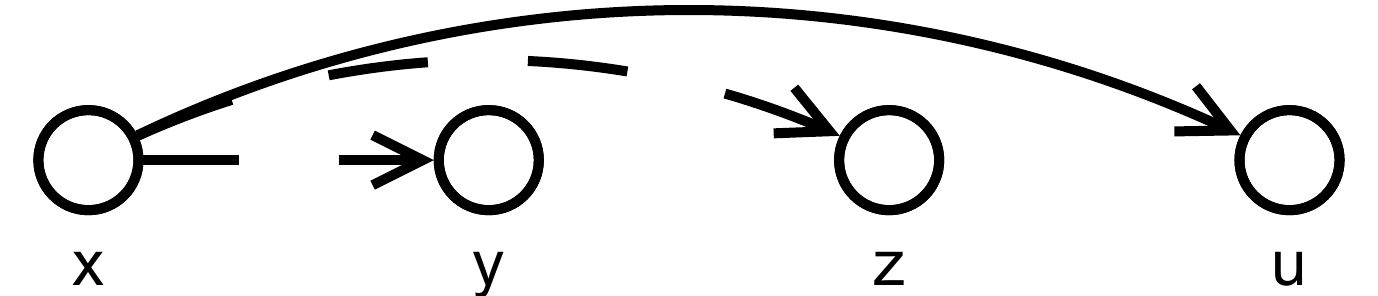}
       \label{subfig:ListCreation10}
   }
   \caption[List Creation: Border Node with Intermediate Node]{The border node $x$ can be excluded with the use of node $u$.}
   \label{fig:ListCreationComplexCases}
\end{figure}

   \begin{enumerate}
       \item[2a)] $x.bw > u.bw > y.bw > z.bw$ (cf. Fig. \ref{subfig:ListCreation8}):\\
       The results depends on the execution order: If the edge $(x,y)$ is considered first, $y$ is integrated in the neighborhood of $x$. In the next step with the \emph{LinearizeNeighborhood()} operation, a new implicit edge $(u,y)$ is added since $u$ and $y$ are successive neighbors of $x$. We can treat $u$ as new $y$ and the case 1b holds.\\
       If $(x,z)$ is considered first, $z$ is integrated in the neighborhood, the implicit edge $(u,z)$ is added with the \emph{LinearizeNeighborhood()} operation and with $u$ as new $z$, the case 1b also holds.
\item[2b)] $x.bw > y.bw > u.bw > z.bw$ (cf. Fig. \ref{subfig:ListCreation9}):\\
       The same argumentation as in case 2a reduces to case 1b or 1c.
\item[2c)] $x.bw > y.bw > z.bw > u.bw$ (cf. Fig. \ref{subfig:ListCreation10}):\\
       The same argumentation as in case 2a reduces to case 1b or 1c.
   \end{enumerate}

For the cases in which the edges point to $x$, a similar argumentation is possible to exclude the border node $x$. The exclusion of border nodes is executed for all border nodes on the undirected path from $v$ to $w$ until $v$ and $w$ are directly connected by an edge $(v,w)$ or $(w,v)$. Eventually, for each pair of direct neighbors $v$ and $w$ it yields that there is an edge $(v,w) \in E$ or $(w,v) \in E$. As the direct neighbors are always needed in our topology, the edges will be converted to explicit edges, if they are still implicit ones. By using one \emph{IntroduceNode()} operation, the backward edges will be created for each pair and also integrated as explicit edges. Since now the double-linked list is completed at the base level, $G_e(t') \supseteq G_e^{list_0}$ at time $t'>t$.
\end{proof}

Moreover, it follows from the algorithm that the set of linked lists at level $i$,
$E^{list_i}$, is maintained over time.

\begin{lemma}[List Maintenance] \label{lemma:ListMaintenance}
If $E_e(t) \supseteq E^{list_i}$ at time $t$, then $E_e(t') \supseteq
E^{list_i}$ at any time $t'>t$.
\end{lemma}

\begin{proof}
We start with a correct list $G_e^{list_i}$ at time $t$. As there are no external dynamics, no nodes enter or leave the system. To destroy the correct list, a node has to be removed from the neighborhood of any other node. The only removal happens in the \emph{CheckNeighborhood()} operation after the check if the node is needed. But this check always returns true for the closest neighbors of a node because they are always needed. They are always in the range at level $i$ and \emph{CheckNode()} returns $true$. Therefore no edges are removed and it follows that $G_e(t') \supseteq G_e(t) \supseteq G_e^{list_i}$.
\end{proof}

Starting with a linked list at a level $i$ we can show the creation of the
HSkip+ links at the same level, $E^{HSkip+_i}$:

\begin{lemma}[HSkip+ Creation] \label{lemma:HSkipCreation}
If $E_e(t) \supseteq E^{list_i}$ at time $t$, then eventually $E_e(t')
\supseteq E^{HSkip+_i}$ at time $t'>t$.
\end{lemma}

\begin{proof}

\begin{figure}[htb]
   \centering
   \subfloat[][Double-linked list in the environment of node $v$ at a level $i$.]{
       \includegraphics[scale=.25]{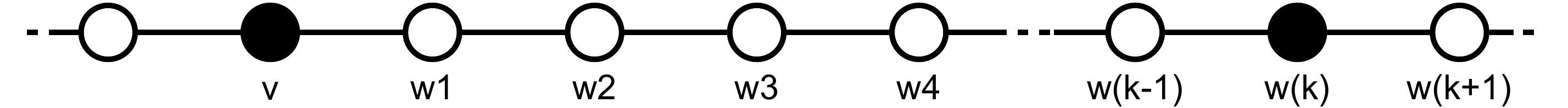}
       \label{subfig:HSkip+Creation1}
   }
   \qquad
   \subfloat[][Node $v$ and node $w_2$ are introduced each other by node $w_1$.]{
       \includegraphics[scale=.25]{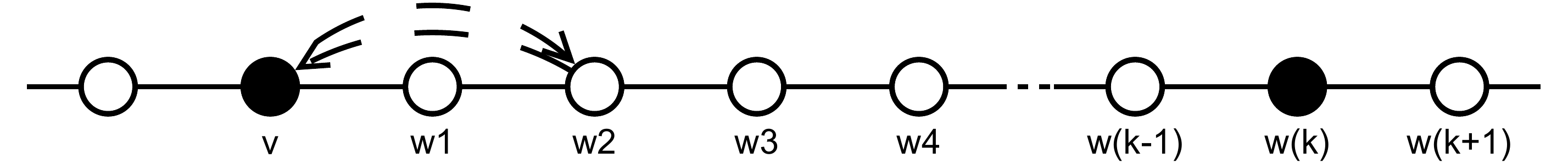}
       \label{subfig:HSkip+Creation2}
   }
   \qquad
   \subfloat[][Node $v$ and node $w_2$ are connected through skip edges.]{
       \includegraphics[scale=.25]{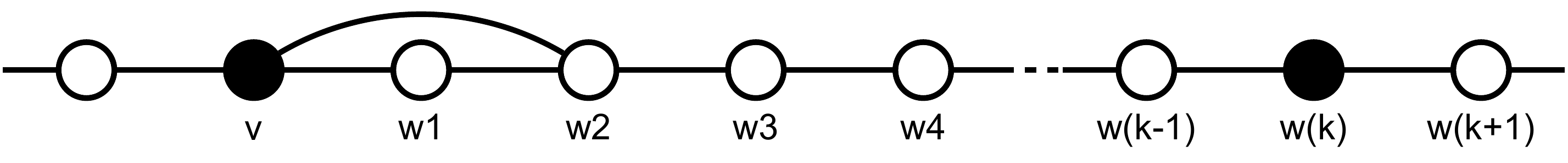}
       \label{subfig:HSkip+Creation3}
   }
   \qquad
   \subfloat[][Node $w_3$ is introduced to node $v$ by node $w_2$.]{
       \includegraphics[scale=.25]{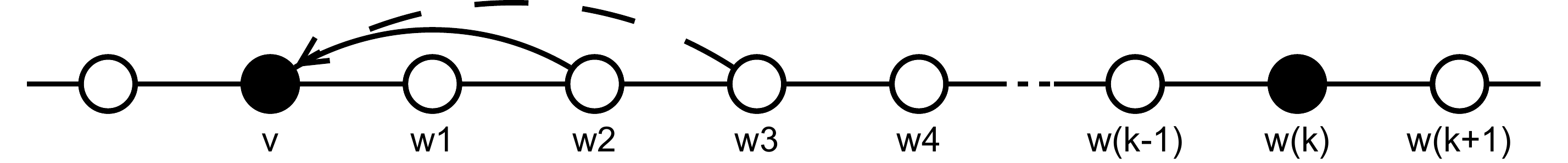}
       \label{subfig:HSkip+Creation4}
   }
   \qquad
   \subfloat[][Node $v$ is connected to node $w_3$ and introduces itself to node $w_3$.]{
       \includegraphics[scale=.25]{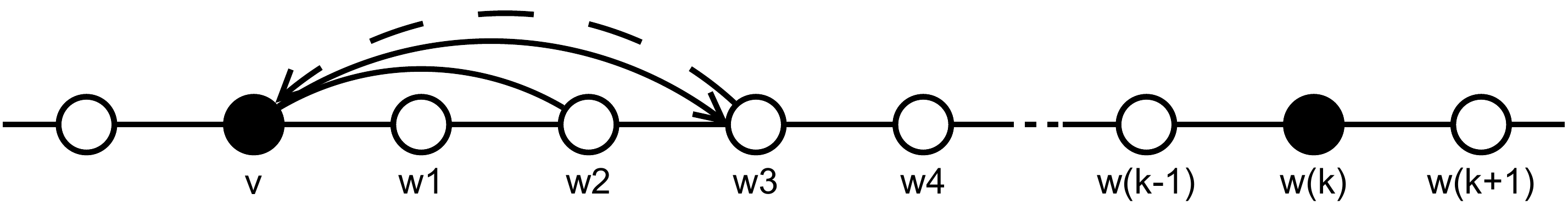}
       \label{subfig:HSkip+Creation5}
   }
   \qquad
   \subfloat[][Node $v$ and node $w_3$ are connected through skip edges.]{
       \includegraphics[scale=.25]{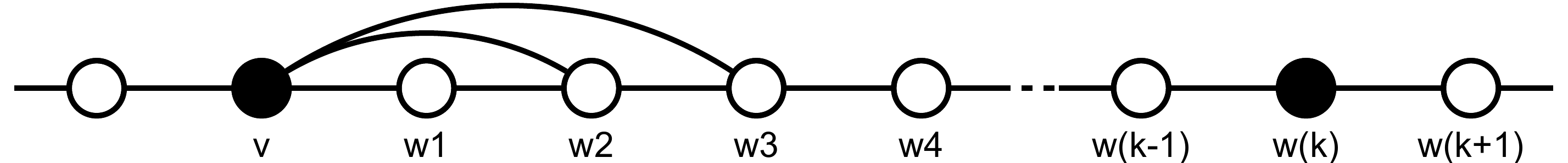}
       \label{subfig:HSkip+Creation6}
   }
      \qquad
      \subfloat[][Node $v$ is connected to nodes $w_1, \ldots w_k$; node $w_{k+1}$ is introduced to node $v$ by node $w_k$.]{
          \includegraphics[scale=.25]{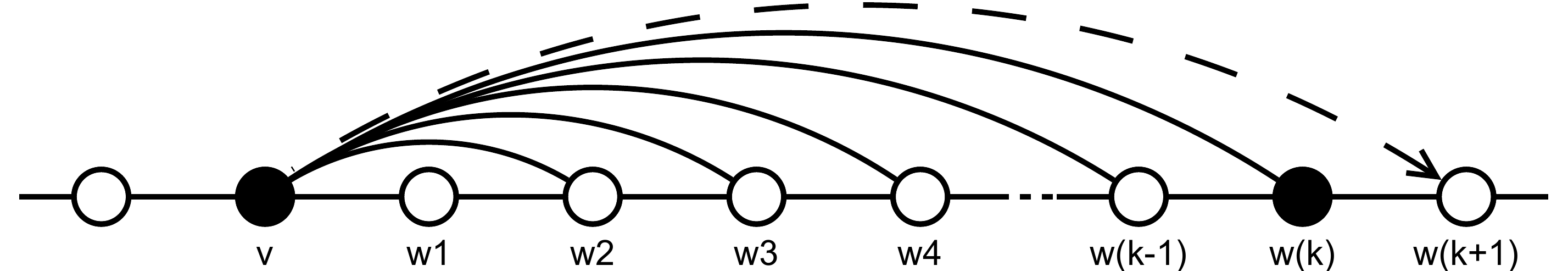}
          \label{subfig:HSkip+Creation7}
      }
   \caption[HSkip+ Creation at Level $i$]{The correct creation of the skip+ edges for node $v$ at level $i$.}
   \label{fig:HSkip+Creation}
\end{figure}

We start with a correct linked list $G_e^{list_i}$ which is created and obtained (cf. Lem. \ref{lemma:ListCreation} and \ref{lemma:ListMaintenance}). We look at an arbitrary node $v$. It is connected to its direct predecessor $u_1$ and successor $w_1$, formally $\mathrm{localClosestPred}(v,i) = \mathrm{closestPred}(v,i)$ and $\mathrm{localClosestSucc}(v,i) = \mathrm{closestSucc}(v,i)$ (cf. Fig. \ref{subfig:HSkip+Creation1}). The same holds for all other nodes, they are all connected to their direct predecessors and successors.

First we only look at the successors of node $v$. With the \emph{IntroduceClosestNeighbors()} operation, node $w_1$ introduces the nodes $v$ and $w_2$ to each other and the implicit edges $(v,w_2)$ and $(w_2,v)$ are created (cf. Fig. \ref{subfig:HSkip+Creation2}). As both edges are needed in the neighborhood of the nodes $v$ and $w_2$, they are both converted to explicit edges (cf. Fig. \ref{subfig:HSkip+Creation3}).

Now node $w_2$ introduces $w_3$ to $v$ (cf. Fig. \ref{subfig:HSkip+Creation4}). As the node $w_3$ is needed in the neighborhood of $v$ it is also converted to the explicit edge $(v, w_3)$ and with the \emph{IntroduceNode()} operation the implicit edge $(w_3, v)$ is created (cf. Fig. \ref{subfig:HSkip+Creation5}). Also this edge is converted to an explicit edge (cf. Fig. \ref{subfig:HSkip+Creation6}).

The same procedure is continued for the nodes $w_4, \ldots, w_k$ which are all needed neighbors for node $v$. Then the last neighbor $w_k$ introduces $w_{k+1}$ to node $v$ (cf. Fig. \ref{subfig:HSkip+Creation7}). As it is not needed, it is not integrated in the neighborhood. This implicit edge is also created in the future by the \emph{IntroduceClosestNeighbors()} operation of node $w_k$, but never converted to an explicit edge at level $i$. The successors are created correctly and the same observation can be made for the predecessors. Therefore, we eventually have all needed edges for the topology and $G_e(t) \supseteq G_e^{HSkip+_i}$.
\end{proof}

Additionally, it follows from our rules that the HSkip+ topology at a level
$i$ is maintained over time.

\begin{lemma}[HSkip+ Maintenance] \label{lemma:HSkipMaintenance}
If $E_e(t) \supseteq E^{HSkip+_i}$ at time $t$, then $E_e(t') \supseteq
E^{HSkip+_i}$ at any time $t'>t$.
\end{lemma}

\begin{proof}
The proof is equivalent to the list maintenance (cf. Lem. \ref{lemma:ListMaintenance}): Explicit edges are only removed during the \emph{CheckNeighborhood()} operation. As all neighbors are needed, no edge is removed and we get $G_e(t') \supseteq G_e(t) \supseteq G_e^{HSkip+_i}$.
\end{proof}

Using the simple observation that $E^{list_{i+1}} \subseteq E^{HSkip+_i}$, we
can then conclude:

\begin{lemma}[HSkip+ Induction] \label{lemma:HSkipInduction}
If $E_e(t) \supseteq E^{HSkip+_i}$ at time $t$, then $E_e(t') \supseteq
E^{HSkip+_{i+1}}$ at some time $t'>t$.
\end{lemma}

\begin{proof}

\begin{figure}[htb]
   \centering
   \subfloat[][The topology at level $i$ for one component.]{
       \includegraphics[scale=.25]{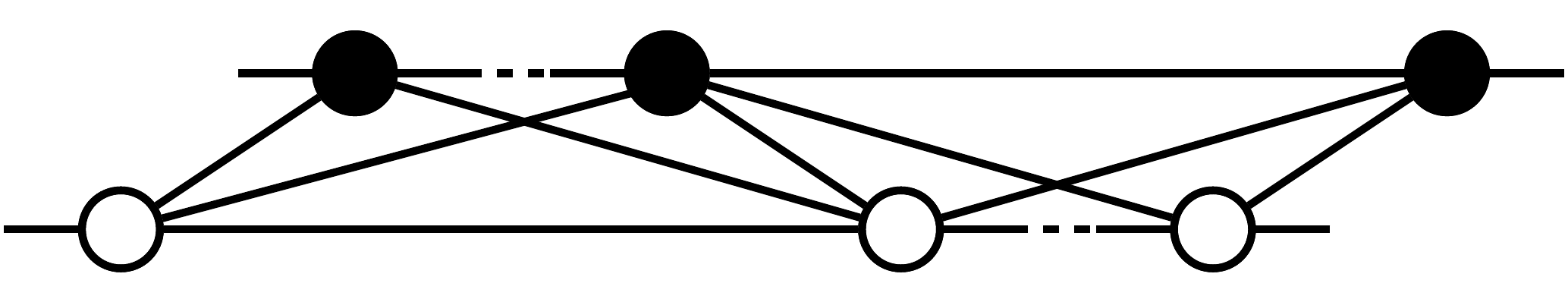}
       \label{subfig:HSkip+Induction1}
   }
   \qquad
   \subfloat[][The double-linked lists at level $i+1$.]{
       \includegraphics[scale=.25]{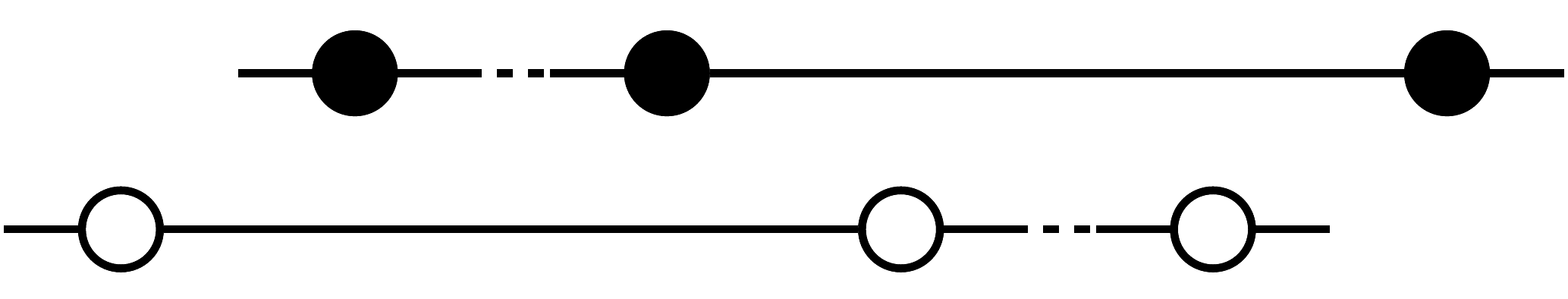}
       \label{subfig:HSkip+Induction2}
   }
   \caption[HSkip+ Creation at Level $i+1$]{The inductive step from level $i$ to level $i+1$.}
   \label{fig:HSkip+Induction}
\end{figure}

Starting with $G_e^{HSkip+_i}$, each node has an edge to the previous and one to next node with $0$ and $1$ as next bit of the bit string (cf. Fig. \ref{subfig:HSkip+Induction1}). Each component of level $i$ is separated into two components of level $i+1$ according to the $i+1$-th bit. Furthermore, the components of the next level are already connected and they already form a double-linked list (cf. Fig. \ref{subfig:HSkip+Induction2}). Now we can apply Lemma \ref{lemma:HSkipCreation} and the correct HSkip+ topology is formed at level $i+1$ and maintained over the time (cf. Lem. \ref{lemma:HSkipMaintenance}).
\end{proof}

The previous lemmas immediately imply Theorem \ref{theorem:Convergence}.
\end{proof}

\subsubsection{Closure}

After proving the convergence of our algorithm we need to show its closure.
This means that the network stays in a legal state once it has reached one.
Formally, we show (cf. Def. \ref{def:Closure}):

\begin{theorem}[Closure]
If $G_e(t)=G^{HSkip+}$ at time $t$, then $G_e(t')=G^{HSkip+}$ at any time
$t'>t$.
\end{theorem}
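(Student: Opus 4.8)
The plan is to show that a legal state is a fixed point of the algorithm in the following sense: starting from $G_e(t) = G^{HSkip+}$ with all node information correct, no action ever changes the explicit edge set. Since the periodic actions are continuously enabled and the reactive $received(m)$ actions fire on any message generated by the periodic ones, I would argue this by examining each action in turn and checking that it neither removes a needed edge nor, via the \emph{build} messages it sends, causes a \emph{Build} that inserts a spurious edge or triggers a removal. Two invariants carry the argument: (i) in a legal state, for every node $v$ and every level $i \le \mathrm{level}(v)$, the local cached quantities $\mathrm{localFarthestPred}(v,i)$, $\mathrm{localFarthestSucc}(v,i)$, $\mathrm{localClosestPred}(v,i)$, $\mathrm{localClosestSucc}(v,i)$, $\mathrm{localNeighbors}(v,i)$ agree with their global counterparts (this follows from $G_e = G^{HSkip+}$ together with correctness of node information); and (ii) every \emph{build} message in transit carries a node $x$ that is already an explicit neighbor of its recipient.

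For the removal direction, the only place an explicit edge is deleted is inside \emph{CheckNeighborhood()} (called periodically and from \emph{Build}), which removes $w$ from $v.nh$ exactly when $\mathrm{CheckNode}(w)=\mathit{false}$. In a legal state, $w \in v.nh$ means $w \in \mathrm{range}(v,i)$ for some $i$ by Definition~\ref{def:HSkip+}; using invariant (i), the guard in \emph{CheckNode()} at that level $i$ evaluates to true, so $needed$ becomes $\mathit{true}$ and $w$ is retained. Hence no explicit edge is ever removed. For the insertion direction, I would check that each of \emph{IntroduceNode()}, \emph{IntroduceClosestNeighbors()}, \emph{LinearizeNeighbors()}, and the forwarding branch of \emph{CheckNeighborhood()} only ever sends $(\mathit{build}, x)$ where $x$ already belongs to the recipient's range in $G^{HSkip+}$: \emph{IntroduceNode()} sends $v$ to each $w \in v.nh$, and $w \in \mathrm{range}(v,i)$ implies symmetrically $v \in \mathrm{range}(w,i)$ since the range is defined by a symmetric prefix/bandwidth-interval condition; \emph{IntroduceClosestNeighbors()} and \emph{LinearizeNeighbors()} send, to a level-$i$ neighbor $w$ of $v$, another level-$i$ neighbor of $v$, and one checks that two nodes sharing $v$ as a common level-$i$ neighbor and lying within $v$'s range are themselves within each other's range (this is the "three consecutive nodes in a component stay in range" property used already in Lemma~\ref{lemma:HSkipCreation}). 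Consequently, when such a message reaches its target, the \emph{Build} operation enters the branch $x \in v.nh$, merely updates (already correct) information, and calls \emph{CheckNeighborhood()}, which by the previous paragraph removes nothing.

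The main obstacle is the bookkeeping around the \emph{localFarthest} quantities and the claim that the \emph{build} messages generated in a legal state are exactly the "harmless" ones — i.e., ruling out that \emph{IntroduceClosestNeighbors()} or \emph{LinearizeNeighbors()} ever ships a node that sits just outside the recipient's range at level $i$ but would be accepted at some higher level (which would still leave $G_e$ unchanged, but requires care) or, worse, just outside at every level (which would trigger the forwarding branch and could in principle perturb a third node). Handling this cleanly amounts to proving the small combinatorial fact that in $G^{HSkip+}$, for any node $w$ that is a level-$i$ neighbor of $v$, the closest and the consecutive level-$i$ neighbors of $v$ that $v$ forwards to $w$ all lie in $\mathrm{range}(w,j)$ for some $j$ — intuitively because $w$'s range at level $i$ already spans $v$'s two farthest level-$i$ neighbors on the relevant side, and ranges only grow as we move along a component. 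Once this is established, closure follows: every enabled action leaves $G_e$ equal to $G^{HSkip+}$ and leaves all node information correct, so by induction on the computation $G_e(t') = G^{HSkip+}$ for all $t' > t$.
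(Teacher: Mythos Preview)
Your proposal is correct and follows the same high-level strategy as the paper: argue, by a one-step induction, that in a legal state no explicit edge is ever added or removed, so $G_e$ stays equal to $G^{HSkip+}$.

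Where you diverge is in how much machinery you bring to the ``no addition'' direction. You set up invariant~(ii) --- that every \emph{build} message in flight carries a node already in the recipient's neighborhood --- and then verify it action by action, which forces you into the combinatorial case analysis you flag as the ``main obstacle''. The paper sidesteps all of this with a single observation: an edge is added in \emph{Build()} only when \emph{CheckNode}$(x)$ returns \textit{true}, and an edge is removed in \emph{CheckNeighborhood()} only when \emph{CheckNode}$(w)$ returns \textit{false}; but in a legal state, \emph{CheckNode} at $v$ (with correct local information, your invariant~(i)) returns \textit{true} for exactly the nodes in the HSkip+ neighborhood of $v$, which is already $v.nh$. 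Hence no edge can be added and none removed, regardless of what \emph{build} messages happen to be circulating. In other words, the paper never needs to know \emph{which} messages are in transit --- it only needs that the local ``needed'' predicate is saturated by the current neighborhood. Your route works, but you are doing more than is required; the range-containment lemma you isolate in the last paragraph is not needed for closure (it is, however, essentially what drives Lemma~\ref{lemma:HSkipCreation}).
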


\begin{proof}
Suppose that at time $t$ we have a network which forms the desired topology
with its explicit edges, i.e., $E_e(t) = E^{HSkip+}$. If $E_e(t+1) \neq
E_e(t)$, then at least one explicit edge is added or removed. Edges are only
removed in the \emph{CheckNeighborhood()} operation if they are not needed and
edges are only added in the \emph{Build()} operation if they are needed for
the neighborhood. However, $G_e(t)$ already forms the correct neighborhood at
time $t$ at all levels (cf. Def. \ref{def:HSkip+}). As there are no external
changes and faults, the neighborhood is still correct at time $t+1$.
Therefore, no edge is removed or added, so $G_e(t+1) = G_e(t) = G^{HSkip+}$.
\end{proof}

The convergence and closure together show the correctness of the
self-stabilizing algorithm. In other words we have designed an algorithm which
creates the HSkip+ topology and ensures that it stays correct.

\subsection{External Dynamics}
\label{sec:ExternalDynamics}

As external dynamics of a network we consider all events which can have an
influence on the network. Two typical events for a peer-to-peer system are
arrivals and departures of nodes. The network has to adapt the topology in
this case. In our HSkip+ network we will also consider changes in the
bandwidths of the nodes because also this will have an influence on the
network topology. To handle these events we need a Join, Leave, and Change
operation.

\subsubsection{Join}
\label{subsec:Join}

The \emph{Join} operation in our network is very simple. If a node $v$ wants
to join the network, it just has to introduce itself to some node $w$. The
rest will be handled by the self-stabilization. Hence, it suffices to execute
the code in Fig.~\ref{alg:Join}.

\begin{figure}[htb]
    \fbox{\parbox{0.975\columnwidth}{
    \begin{algorithmic}\footnotesize
        \Function{Join}{}
            \State send $m=(\mathrm{\emph{build}}, v)$ to a known node $w$
        \EndFunction
    \end{algorithmic}
    }}
    \caption[Join()]{The \emph{Join} operation of node $v$ sends a \emph{build} message to $w$.}
    \label{alg:Join}
\end{figure}

\subsubsection{Leave}
\label{subsec:Leave}

We distinguish between two cases: A scheduled \emph{Leave} and a \emph{Leave}
caused by a failure. In the first case (cf. Fig. \ref{alg:Leave}) the node $v$
which wants to leave the network simply sends a \emph{remove} message to all
of its neighbors. The receivers of these messages remove the node $v$ from
their neighborhood (cf. Fig. \ref{alg:Remove}). The leaving node also deletes
its entire neighborhood.

\begin{figure}[htb]
    \fbox{\parbox{0.975\columnwidth}{
    \begin{algorithmic}\footnotesize
        \Function{Leave}{}
            \ForAll{$w \in v.nh$}
                \State send $m=(\mathrm{\emph{remove}}, v)$ to node $w$
            \EndFor
            \State $v.nh = \emptyset$
        \EndFunction
    \end{algorithmic}
    }}
    \caption[Leave()]{ The \emph{Leave()} operation of node $v$ sends \emph{remove} messages.}
    \label{alg:Leave}
\end{figure}

The \emph{Remove()} operation removes a given node from the neighborhood if it
is present (cf. Fig. \ref{alg:Remove}). It is executed as reaction to a
\emph{remove} message. A neighborhood check is not needed, as there cannot be
too much information after removing some information.

\begin{figure}[htb]
    \fbox{\parbox{0.975\columnwidth}{
    \begin{algorithmic}\footnotesize
        \Function{Remove}{node $x$}
            \If{$x \in u.nh$}
                \State $u.nh = u.nh \backslash \{x\}$
            \EndIf
        \EndFunction
    \end{algorithmic}
    }}
    \caption[Remove()]{The \emph{Remove()} operation removes the node $x$ from $v.nh$.}
    \label{alg:Remove}
\end{figure}

Additionally, we can have a \emph{Leave} in the network caused by a node
failure. In this case, we assume the existence of a failure detector at the
nodes which checks periodically the existence of the neighbor nodes. Therefore, the outcome of a failed node is the same as for a scheduled
\emph{Leave}: all neighboring nodes invalidate their links to the failed node.

\subsubsection{Change}
\label{subsec:Change}

The \emph{Change} operation updates the bandwidth value of a node $v$ and
therefore the order in our topology has to be updated. The operation itself
only needs to update its internal variable of the bandwidth (cf. Fig.
\ref{alg:Change}).

\begin{figure}[htb]
    \fbox{\parbox{0.975\columnwidth}{
    \begin{algorithmic}\footnotesize
        \Function{Change}{new bandwidth $bw$}
            \State $v.bw = bw$
        \EndFunction
    \end{algorithmic}
    }}
    \caption[Change()]{The \emph{Change()} operation updates the bandwidth value of $v$.}
    \label{alg:Change}
\end{figure}

The correct recovery of the topology after no more external dynamics are
happening follows directly from the convergence.

\begin{theorem}[Recovery from External Dynamics] \label{theorem:External_Dynamics_Correctness}
If $G_e(t) = G^{HSkip+}$ at time $t$ and a node $v$ joins or a node $u$ leaves
or a node $u$ changes its bandwidth, then eventually $G_e(t') = G^{HSkip+}$ at
time $t' > t$.
\end{theorem}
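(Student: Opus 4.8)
The plan is to reduce the statement entirely to the convergence theorem (Theorem \ref{theorem:Convergence}) by arguing that after any single external event the system is back in a configuration to which that theorem applies. First I would observe that each of the three operations — \emph{Join}, \emph{Leave} (scheduled or by failure), and \emph{Change} — leaves the global state finite: \emph{Join} adds one node together with a single outgoing \emph{build} message, \emph{Leave} deletes a node and emits finitely many \emph{remove} messages, and \emph{Change} only overwrites one internal variable. So after the event we still have a legitimate initial state in the sense required by Section \ref{sec:Self-Stabilization} (finitely many nodes, channels, messages).

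The crux is then to verify that $G(t)$ remains weakly connected immediately after the event, since that is the sole hypothesis of Theorem \ref{theorem:Convergence}. For \emph{Join}, the new node $v$ sends a \emph{build} message to a known node $w$ that already lies in the connected component $G^{HSkip+}$, so $v$ is attached by an implicit edge and weak connectivity is preserved. For \emph{Change}, no edge is touched at all, so $G(t)$ is literally unchanged and hence still (weakly) connected; only the \emph{target} topology $G^{HSkip+}$ has changed, because the bandwidth ordering has changed. For the scheduled \emph{Leave} of a node $u$, all of $u$'s neighbors drop their links to $u$ and $u$ drops all of its own, so in effect the induced subgraph $G^{HSkip+}[V\setminus\{u\}]$ remains; here I would invoke the structural fact that removing any single node from an HSkip+ graph leaves a weakly connected graph on the remaining nodes — this follows because the bottom-level doubly linked list $E^{list_0}$ alone already spans $V$, and deleting one node from a doubly linked list leaves it connected. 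The failure case is identical once the failure detectors have fired, which happens within finite time by weak fairness of the periodic detector action, so after that point we are in the same situation as a scheduled leave.

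Once weak connectivity after the event is established, Theorem \ref{theorem:Convergence} applies verbatim to the post-event state: it yields a time $t'>t$ with $G_e(t')=G^{HSkip+}$ and all node information correct, where $G^{HSkip+}$ is now computed with respect to the current node set and the current bandwidth values. I would close by remarking that the theorem as stated only treats a \emph{single} external event and assumes no further events occur afterwards; this is consistent with the closure theorem, which then guarantees that the reached legal state is retained. The main obstacle I anticipate is not conceptual but a matter of rigor in the \emph{Leave} case: one must be careful that the \emph{remove} messages and the failure-detector invalidations do not transiently disconnect $G$ — but since an explicit edge $(u,w)$ of the departing node, once invalidated, is never an edge whose removal disconnects the survivors (the survivors are themselves connected via $E^{list_0}$ among the non-$u$ nodes, which was present before and is untouched by $u$'s departure), this reduces to the doubly-linked-list observation above. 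No genuinely new machinery beyond Theorem \ref{theorem:Convergence} and Lemma \ref{lemma:WeaklyConnected}-style reasoning is needed.
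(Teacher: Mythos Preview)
Your proposal is correct and follows the same approach as the paper: verify that the post-event state is still weakly connected and then invoke Theorem~\ref{theorem:Convergence}. In fact you are more thorough than the paper's own proof, which spells out only the \emph{Join} case (new implicit edge $(w,v)$ keeps $G$ weakly connected, then apply convergence) and leaves \emph{Leave} and \emph{Change} implicit; your use of the level-$0$ doubly linked list to argue connectivity after a single node departure is a clean way to fill that gap.
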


\begin{proof}
  If the node $v$ joins the network by sending a \emph{build} message to some node $w$, the graph has a new node $v$ and a new implicit edge $(w,v)$. Therefore $G(t+1) = \left(V(t+1) = V(t) \cup \{v\}, E(t+1) = E(t) \cup \{(w,v)\}\right)$. $G(t+1)$ is obviously weakly connected. As there are no further external dynamics, we can apply the theorem about the convergence of our self-stabilizing algorithm (cf. The. \ref{theorem:Convergence}). After a finite number of steps, we eventually reach the correct topology, formally $G_e (t') = G^{HSkip+}$ at a time $t'>t$.
\end{proof}

It is easy to see that the worst case number of structural changes in a level
does not exceed $O(\log^2 n)$ w.h.p., but with more refined arguments one can
also show the following result:

\begin{theorem}[Structural Changes after External Dynamics] \label{theorem:Join_Changes}
If $G_e(t) = G^{HSkip+}$ at time $t$ and a node $v$ joins or a node $u$ leaves
or a node $u$ changes its bandwidth, then $G_e(t') = G^{HSkip+}$ after
$O(\log^2 n)$ structural changes, w.h.p.
\end{theorem}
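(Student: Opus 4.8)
The plan is to bound, for each of the three external events (join, leave, change), the total number of times an explicit edge is created or destroyed before the system returns to $G^{HSkip+}$. The unifying observation is that a single node $v$ that enters, disappears, or moves in the bandwidth order affects the HSkip+ topology only ``locally'': at each level $i$ it can only change the range of those nodes whose old range or new range had $v$ as an endpoint, plus a bounded number of nodes adjacent to such endpoints. So I would first prove a structural lemma: \emph{after a single external event, the symmetric difference $E^{HSkip+}(\text{before}) \,\triangle\, E^{HSkip+}(\text{after})$ has size $O(\log^2 n)$ w.h.p.} Since in a legal state the degree of every node is $O(\log^2 n)$ w.h.p.\ (this is the classical Skip+ degree bound from \cite{Jacob:2009:DPT:1582716.1582741}, inherited because HSkip+ has the same topology), and $v$ has only $O(\log n)$ levels w.h.p., the number of edges incident to $v$ in either the old or the new topology is $O(\log^2 n)$; the only additional edges affected are those among former/future neighbors of $v$ created or destroyed because a ``farthest'' endpoint shifted by one position, and at each level that is again only $O(\log n)$ new pairs whose ranges change, each contributing $O(\log n)$ edges. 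This gives the target bound on the \emph{net} structural difference.

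The second part is to argue that the self-stabilization process, run from the perturbed legal state, performs only $O(\log^2 n)$ structural changes in total — i.e.\ it does not create and destroy extra transient edges beyond the ones dictated by the topology difference. Here I would invoke the monotonicity built into the convergence proof: by Lemmas \ref{lemma:ListMaintenance} and \ref{lemma:HSkipMaintenance}, once a list or the HSkip+ edges at a level are present they are never removed, and by the construction in Lemmas \ref{lemma:ListCreation} and \ref{lemma:HSkipCreation} the only edges ever \emph{added} during recovery are edges of $E^{list_i}$ and then $E^{HSkip+_i}$, level by level. An edge gets removed by \emph{CheckNeighborhood()} only when it is not needed, which — starting from a state that was legal except for the perturbation at $v$ — can happen only to edges incident to the former neighbors of the departed/moved node, i.e.\ to the $O(\log^2 n)$ edges in the topology difference identified above. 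Thus each structural change is ``charged'' to a distinct edge of $E^{HSkip+}(\text{before}) \,\triangle\, E^{HSkip+}(\text{after})$ (added at most once by monotonicity, removed at most once since it was legal before), giving $O(\log^2 n)$ structural changes w.h.p.

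For the three cases separately: a \emph{Join} adds $v$ with an arbitrary introducing edge $(w,v)$, and the recovery must insert $v$ into $O(\log n)$ linked lists and build its $O(\log^2 n)$ range edges; the delegation of $v$ along longest-common-prefix edges only moves an implicit edge and never causes an explicit structural change, so the count is dominated by the final degree of $v$, namely $O(\log^2 n)$ w.h.p. A \emph{Leave} is symmetric: the \emph{Remove()} messages delete exactly the $O(\log^2 n)$ edges that were incident to $u$, and the remaining nodes must re-close the $O(\log n)$ lists and re-form ranges around the gap, again $O(\log^2 n)$ changes. A \emph{Change} of $u.bw$ is the most delicate and I expect it to be \textbf{the main obstacle}: $u$ may move past many nodes in the bandwidth order, so one has to argue that although $u$'s position changes globally in the sorted order, at each level $i$ the \emph{component} containing $u$ only needs its range recomputed around $u$'s old and new slots, and w.h.p.\ each such component has only $O(\log n)$ nodes within range distance of either slot, so only $O(\log n)$ nodes per level change their edge sets — again $O(\log^2 n)$ total. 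The subtlety is ruling out a cascade where adjusting one range forces a neighbor's range to change, and so on; this is prevented because ranges are defined by the \emph{farthest} predecessor/successor at level $i$, which depends only on the prefix-$(i{+}1)$ structure that is unaffected by $u$'s bandwidth, so a range endpoint can shift by at most one position and the perturbation does not propagate. Combining the three cases yields the theorem.
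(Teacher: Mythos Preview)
Your approach is essentially the same as the paper's --- a counting argument of the form ``number of affected nodes times degree'' --- and your conclusion is correct. Two differences are worth noting. First, the paper's own proof only treats the \emph{Join} case explicitly and simply argues: the degree is $O(\log n)$ w.h.p., and the new node $v$ lies in at most $O(\log n)$ neighborhoods, so at most $O(\log n)$ nodes are affected and each can add or remove at most $O(\log n)$ edges, giving $O(\log^2 n)$. You are more thorough in treating Leave and Change separately and in worrying about transient edge creations/removals during recovery; the paper does not address the latter at all. Second, you quote the Skip+ degree as $O(\log^2 n)$, whereas the paper (and the original Skip+ analysis) uses the tighter $O(\log n)$ degree bound --- ranges have expected constant size per level and there are $O(\log n)$ levels w.h.p. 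With the looser degree your per-level accounting (``$O(\log n)$ pairs each contributing $O(\log n)$ edges'' at each of $O(\log n)$ levels) would overshoot to $O(\log^3 n)$; using the correct $O(\log n)$ degree, the paper's simpler product $O(\log n)\cdot O(\log n)$ already gives the claimed bound without the finer range-shift analysis. Your monotonicity argument via Lemmas~\ref{lemma:ListMaintenance} and~\ref{lemma:HSkipMaintenance} is a genuine addition over the paper's proof, though note those lemmas are stated for the \emph{complete} list/HSkip+ at a level and need a small adaptation when one node is missing or misplaced.
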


\begin{proof}
  To analyze the structural changes which are needed after node $v$ joins the network, we will look at the obsolete edges which have to be removed and  at the new edges which need to be created during the self-stabilization.

  As the degree of the network is limited by $O(\log n)$, each node $w$ has only $O(\log n)$ neighbors in $w.nh$. So each node can only remove at most $O(\log n)$ edges. Furthermore only $O(\log n)$ nodes can be affected from any changes as the new node $v$ can be in at most $O(\log n)$ neighborhoods and cause changes. Therefore at most $O(\log^2 n)$ edges can be removed.

  In addition, the new node $v$ has $O(\log n)$ other nodes in its neighborhood. As already stated in the removing part, only $O(\log n)$ other nodes are affected by the \emph{Join}, each one can create at most $O(\log n)$ new edges. Therefore $O(\log n) + O(\log^2 n)$ new edges can be created.

  Altogether we have at most $O(\log^2 n)$ structural changes in the network after a \emph{Join}.
\end{proof}

With some effort, one can also show that this is an upper bound for the number
of additional messages (i.e., messages beyond those periodically created by
the $true$ guard).

\begin{theorem}[Workload of External Dynamics] \label{theorem:Join_Workload}
If $G_e(t) = G^{HSkip+}$ at time $t$ and a node $v$ joins or a node $u$ leaves
or a node $u$ changes its bandwidth, then $G_e(t') = G^{HSkip+}$ after
$O(\log^2 n)$ additional messages, w.h.p.
\end{theorem}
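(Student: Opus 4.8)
The plan is to bound the number of \emph{extra} messages — those triggered, directly or transitively, by the single external event — by tracking them back to the $O(\log^2 n)$ structural changes guaranteed by Theorem \ref{theorem:Join_Changes} and arguing that each structural change is responsible for only $O(1)$ additional rounds of $\mathrm{\emph{build}}$ and $\mathrm{\emph{remove}}$ messages. First I would set up the accounting carefully: in a legal state, the $true$-guarded periodic actions already fire forever and generate a steady stream of $\mathrm{\emph{build}}$ messages regardless of any join/leave/change; these are not counted. What we must count are the messages that would \emph{not} have been sent had the system simply stayed in its legal state. So I would partition the post-event message traffic into (i) the periodic baseline (ignored), and (ii) the ``delta'' traffic — messages carrying a node reference that is new to the recipient, messages that cause a \emph{Build} to integrate or delegate a reference, and the $\mathrm{\emph{remove}}$ messages emitted during the transient. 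Only the delta traffic contributes to the bound.

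Next I would show that each of the three sources of delta traffic is charged $O(\log n)$ per structural change, giving $O(\log^2 n)$ overall. For \emph{removals}: a $\mathrm{\emph{remove}}$ message is sent only by \emph{Leave} (to $O(\log n)$ neighbors) and no message is sent by \emph{Remove} itself; the ``forwarding'' of a dropped neighbor in \emph{CheckNeighborhood} sends a single $\mathrm{\emph{build}}$ message per dropped edge, and by Theorem \ref{theorem:Join_Changes} only $O(\log^2 n)$ edges are ever dropped, so this is $O(\log^2 n)$ forwarded $\mathrm{\emph{build}}$s. For \emph{creations}: whenever an edge $(w,y)$ is newly created, it happens because some node received a $\mathrm{\emph{build}}$ carrying $y$ and \emph{CheckNode} returned true; each such edge then triggers, over the next constant number of periodic rounds, a bounded number of follow-up introductions — the back-edge from \emph{IntroduceNode}, the at most two closest-neighbor introductions from \emph{IntroduceClosestNeighbors} to each of $w$'s $O(\log n)$ level-neighbors, and the linearization introductions from \emph{LinearizeNeighbors} which again hit $O(\log n)$ neighbors. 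So each of the $O(\log^2 n)$ new edges spawns $O(\log n)$ further delta $\mathrm{\emph{build}}$s — but these further messages only create edges that were already counted among the $O(\log^2 n)$ in Theorem \ref{theorem:Join_Changes}, so there is no unbounded cascade; the total is still $O(\log^2 n)$, or at worst a further $\log n$ factor that the refined argument must shave off. Finally, a delegated $\mathrm{\emph{build}}$ (a reference that is \emph{not} needed) travels along a strictly increasing common-prefix chain, hence for at most $O(\log n)$ hops, and the number of such wandering references is itself $O(\log n)$ (the join node can be injected into $O(\log n)$ neighborhoods), contributing $O(\log^2 n)$.

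The main obstacle I anticipate is ruling out a \emph{multiplicative} blow-up: naively, each of the $O(\log^2 n)$ new edges causes $O(\log n)$ introduction messages, which would give $O(\log^3 n)$, not $O(\log^2 n)$. The ``refined argument'' alluded to in the statement must therefore amortize: the key observation is that the $O(\log n)$ introductions emitted on behalf of a node $w$ at level $i$ are charged not per new edge but per \emph{(node, level)} pair whose closest-neighbor or linearization state actually changed, and there are only $O(\log n)$ affected nodes each over $O(\log n)$ levels — but crucially the \emph{number of distinct messages a fixed node sends per periodic round is bounded by its degree}, $O(\log n)$, and such a node's state stabilizes after $O(1)$ rounds once its neighborhood is correct, so the total over all affected nodes is $O(\log n)\cdot O(\log n) = O(\log^2 n)$. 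Making this amortization precise — identifying exactly which messages are ``delta'' versus ``baseline'' in the transient rounds, and proving the transient lasts only $O(1)$ periodic rounds per level after the relevant list/HSkip+ structure is restored (leaning on Lemmas \ref{lemma:HSkipCreation}–\ref{lemma:HSkipInduction}) — is where the real work lies; the rest is bookkeeping against Theorem \ref{theorem:Join_Changes}.
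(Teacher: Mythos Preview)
Your proposal follows the same skeleton as the paper's proof: route the initial $\mathrm{\emph{build}}$ message to its destination in $O(\log n)$ hops, then charge all remaining additional messages to the $O(\log^2 n)$ structural changes from Theorem~\ref{theorem:Join_Changes}. The paper's argument is considerably terser than yours --- it simply asserts that ``for each new edge, a message is created and integrated as an explicit edge in the next step,'' i.e., one additional message per structural change, and stops there. You go further and explicitly worry about the $O(\log^3 n)$ cascade that could arise if every new edge triggered $O(\log n)$ fresh introductions, and you sketch the amortization (charging to $(\text{node},\text{level})$ pairs rather than to individual edges) that brings this back to $O(\log^2 n)$. That extra care is not misplaced: the paper's proof does not address this point at all, so your version is in fact more complete, not merely more verbose. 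The only thing to tighten is the amortization itself --- your final paragraph states the right charging scheme but should make explicit that once a node's local range at a level is correct, the periodic introductions it sends become baseline rather than delta, so each affected $(\text{node},\text{level})$ pair contributes delta messages for only $O(1)$ rounds.
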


\begin{proof}
  Firstly, the \emph{build} message for the \emph{Join} from node $v$ to node $w$ has to be forwarded to a node $x$ which can include the new node $v$ in its neighborhood. This needs at most $O(\log n)$ additional messages because the diameter of the network is at most $O(\log n)$ and the message will be forwarded closer to the target in each step as the bit string is adjusted at least one bit in each step.

  At the  position we need several further \emph{build} messages to introduce and linearize the neighborhood. We have to create at most $O(\log^2 n)$ structural changes (cf. Lem. \ref{theorem:Join_Changes}). For each new edge, a message is created and integrated as an explicit edge in the next step. Therefore with $O(\log^2 n)$ messages we can create all new edges. Altogether we have a workload of at most $O(\log^2 n)$ additional messages to complete the \emph{Join} operation.
\end{proof}

\subsection{Routing}
\label{sec:Routing}

The routing of a message to a target node $x$ is handled by the message
$m=(\text{\emph{lookup}}, x)$. The \emph{Lookup()} operation handles the
forwarding of \emph{lookup} messages (cf. Alg. \ref{alg:Lookup}). If the
lookup target is equal to the current node, the lookup has finished. This is
checked with the help of the identifiers of the nodes. If this is not the
case, the lookup is forwarded to the next better node.

\begin{figure}[htb]
    \fbox{\parbox{.975\columnwidth}{
    \begin{algorithmic}\footnotesize
        \Function{Lookup}{node $x$}
            \If{$v.id=x.id$}
                \State done
            \Else
                \State $length$\\
                \hspace{\algorithmicindent}\hspace{\algorithmicindent}\hspace{\algorithmicindent}$\leftarrow \min{(\mathrm{level}(v), \mathrm{commonPrefix}(v, x))}$
                \If{$\mathrm{localFarthestPred}$\\
                        \hspace{\algorithmicindent}\hspace{\algorithmicindent}\hspace{\algorithmicindent}\hspace{\algorithmicindent}$(v, length, x.rs[length]) \neq nil$}
                    \State $w \leftarrow \mathrm{localFarthestPred}$\\
                        \hspace{\algorithmicindent}\hspace{\algorithmicindent}\hspace{\algorithmicindent}\hspace{\algorithmicindent}\hspace{\algorithmicindent}$(v, length, x.rs[length])$
                \Else
                    \State $w \leftarrow \mathrm{localFarthestSucc}$\\
                        \hspace{\algorithmicindent}\hspace{\algorithmicindent}\hspace{\algorithmicindent}\hspace{\algorithmicindent}\hspace{\algorithmicindent}$(v, length, x.rs[length])$
                \EndIf
                \State send $m=(\mathrm{\emph{lookup}}, x)$ to node $w$
            \EndIf
        \EndFunction
    \end{algorithmic}
    }}
    \caption[Lookup()]{The \emph{Lookup()} operation of $v$ handles the \emph{build} messages.}
    \label{alg:Lookup}
\end{figure}

The next hop is determined by the random bit strings of the involved nodes: In
each step we want to have at least one bit more in common with the target node
$x$. Formally, if at the current node $v$ the bit string has $i$ bits in
common with the target ($\mathrm{commonPrefix}(v, x) = i$), at the next step
at node $w$ we want at least $i+1$ bits in common ($\mathrm{commonPrefix}(w,
x) \geq i+1$). In our topology such a node is always present in level $i$ of a
node $v$ in both the successors and the predecessors (unless there is no such
higher level) as every node is connected at every level $i$ to at least one
predecessor and successor with $0$ and one with $1$ as next bit. First, the routing is done along predecessors so that the message
is guaranteed to follow a sequence of nodes of monotonically increasing
bandwidth. Once the node with the highest bandwidth defined by this rule has been reached, we use the successor nodes until the target node is reached. From the routing protocol it
follows:

\begin{theorem}[Correctness of Routing] \label{the:Routing_Corretness}
If $G_e = G^{HSkip+}$, then a message $m=(\mathrm{\emph{lookup}}, w)$ sent by
a node $u$ eventually reaches node $w$.
\end{theorem}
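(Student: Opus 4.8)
The plan is to argue that the \emph{Lookup} procedure, when run on a graph that is already the legal topology $G^{HSkip+}$, produces a finite path from $u$ to $w$ and terminates at $w$. Since we are in a legal state, all local variants coincide with their global counterparts, so $\mathrm{localFarthestPred}$, $\mathrm{localFarthestSucc}$, $\mathrm{localNeighbors}$, etc., are exactly the sets from Definitions \ref{def:Range}--\ref{def:HSkip+}. First I would split the routing into the two phases described before the theorem: an \emph{ascending phase} in which the message is forwarded only to predecessors (nodes of strictly larger bandwidth), and a \emph{descending phase} in which it is forwarded only to successors. The key quantity to track in each phase is $\mathrm{commonPrefix}(\cdot,w)$ together with the bandwidth of the current node.

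For the ascending phase, let $v$ be the current holder with $i=\min(\mathrm{level}(v),\mathrm{commonPrefix}(v,w))$. If $i=\mathrm{commonPrefix}(v,w)<\mathrm{level}(v)$, then $v$ has a neighbor at level $i$ whose $(i{+}1)$-st bit equals $w.rs[i]$ on the predecessor side (Def.\ \ref{def:Range} and the remark that every node is connected at every level to at least one predecessor and one successor with next bit $0$ and with next bit $1$); the code picks $\mathrm{farthestPred}(v,i,w.rs[i])$, which is such a node, so the next node $w'$ satisfies $\mathrm{commonPrefix}(w',w)\ge i+1$ and $w'.bw>v.bw$. Since bandwidths are distinct and the node set is finite, the bandwidth strictly increases at every ascending hop, so the ascending phase cannot cycle and must terminate: it stops at the first node $v^\star$ where the predecessor side fails to offer a bit-improving neighbor, i.e.\ where $\mathrm{localFarthestPred}(v^\star,\mathit{length},w.rs[\mathit{length}])=nil$. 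I would then observe that at $v^\star$ either $v^\star=w$ (done), or $\mathrm{commonPrefix}(v^\star,w)=\mathrm{level}(v^\star)$ but $v^\star\ne w$, meaning $v^\star$ and $w$ are in the same maximal component and $v^\star$ has the largest bandwidth in that component among nodes on the predecessor side — equivalently $w$ lies below $v^\star$ in bandwidth within their common component.

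The descending phase is the mirror image: from $v^\star$ the code forwards along $\mathrm{farthestSucc}(v,\mathit{length},w.rs[\mathit{length}])$, each hop either strictly increases $\mathrm{commonPrefix}(\cdot,w)$ (when still in a level below $\mathrm{level}$) or, once $\mathrm{commonPrefix}$ has reached $\mathrm{level}$ of the current node, strictly decreases the bandwidth while staying in the component that contains $w$. In both sub-cases a strictly monotone integer quantity decreases, so the descending phase terminates, and since at termination the message can only stop when $v.id=x.id$, it stops exactly at $w$. I would phrase this with a single potential function $\Phi(v)=(\mathrm{level}(G)-\mathrm{commonPrefix}(v,w),\ \text{rank of }v.bw)$ under lexicographic order, noting that the ascending phase drives the first coordinate to its minimum and the second coordinate up to the component boundary, and the descending phase then drives it back down to $w$; in all hops $\Phi$ changes strictly monotonically (up in phase one along the bandwidth coordinate until the prefix coordinate is exhausted, down in phase two), so no state repeats. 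Combined with fair message receipt (every \emph{lookup} message is eventually delivered and processed), finiteness of the hop count yields that $w$ is reached.

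The main obstacle I anticipate is making the transition between the two phases airtight: one must verify that the node $v^\star$ at which the predecessor side "runs out" is genuinely the unique node of maximal bandwidth in $w$'s component at the relevant level and that $w$ is then reachable purely by descending successor hops — i.e.\ that no bit-improving successor hop ever forces the message \emph{past} $w$ in bandwidth. This follows because each successor hop selects $\mathrm{farthestSucc}(v,i,b)$ with $b$ the next bit of $w$, and $w$ itself is a candidate in the argmin defining $\mathrm{farthestSucc}$, so the hop never overshoots $w$; once $\mathrm{commonPrefix}$ equals $\mathrm{level}(v)=\mathrm{level}(w)$ the two nodes are in the same trivial-at-the-next-level component and the linked list of that component (guaranteed by $E^{HSkip+}\supseteq E^{list}$ at every level) delivers the message to $w$ in finitely many successor steps. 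A secondary point to check is the degenerate case $\mathrm{commonPrefix}(u,w)$ already equal to $\mathrm{level}(G)$ at the start, where the ascending phase is empty and only descending happens; this is subsumed by the same potential argument.
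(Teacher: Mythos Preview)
Your argument reaches the right conclusion, but it is considerably more elaborate than the paper's and rests on a picture of the descending phase that does not quite match what the \emph{Lookup} code does.

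The paper's proof is a one-line induction on $\mathrm{commonPrefix}(\cdot,w)$: in a legal state one always has $\mathrm{level}(v)\ge \mathrm{commonPrefix}(v,w)$ whenever $v\neq w$ (the component at that level contains both $v$ and $w$, hence is non-trivial, hence $v$ has a neighbor there), so $\mathit{length}=\mathrm{commonPrefix}(v,w)$. The next hop---whether it comes from $\mathrm{farthestPred}$ or from $\mathrm{farthestSucc}$---has prefix $\mathrm{prefix}(v,\mathit{length})\cdot w.rs[\mathit{length}]=\mathrm{prefix}(w,\mathit{length}+1)$, so \emph{every} hop strictly increases the common prefix by at least one. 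This bounds the number of hops by $\mathrm{level}(G)+1$ and forces arrival at $w$; no phase decomposition and no potential on bandwidths is needed.

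Your two-phase decomposition with bandwidth monotonicity is exactly the right instrument for Lemma~\ref{lem:InvolvedNodes} and for the congestion bound in Theorem~\ref{the:Routing_Congestion}, where it matters \emph{which} nodes are visited, not merely that $w$ is reached. For bare reachability it is overkill, and it leads you to posit a sub-phase in which ``$\mathrm{commonPrefix}$ has reached $\mathrm{level}$ of the current node'' and the message walks a linked list toward $w$ with the prefix held constant. That sub-phase never occurs: even when $\mathrm{commonPrefix}(v,w)=\mathrm{level}(v)$, the chosen successor still matches one more bit of $w$, so the common prefix strictly increases again (and the next node simply has a larger $\mathrm{level}$). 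Consequently your ``overshoot'' concern and the appeal to $w$ being a candidate in the $\arg\min$ are unnecessary; the first time $\mathrm{commonPrefix}$ would exceed $\mathrm{level}(G)$ the current node is already $w$. Your termination argument via strict bandwidth monotonicity in each phase is sound, but the simpler monotone quantity $\mathrm{commonPrefix}(\cdot,w)$ handles both phases at once.
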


\begin{proof}
We prove the correctness inductive such that at each step the random bit string $v.rs$ of the current node $v$ is more similar to the target bit string $x.rs$. As inductive hypothesis it holds that at step $i$ at node $v$: $\mathrm{prefix}(v, i) = \mathrm{prefix}(x, i)$. As induction base we look at step $i=0$ at the source node $s$, here at least the prefix of length $0$ should be equal to the target bit string, which obviously is the case. In the induction step we start from node $v$ with an equal prefix of length $i$ and the message has to reach a node $w$ with an equal prefix of length $i+1$. In the correct HSkip+ topology, each node has a predecessor and a successor with $0$ and $1$ as next bit. Only at the border nodes, the successor or predecessor could be absent, but at least one of the sides is complete. Therefore the length of the equal prefix is extended. As the target node $x$ exists in the network and is included in the HSkip+ topology it will eventually be reached by the message.
\end{proof}

We can also guarantee the following critical property, this is important to keep the congestion low.

\begin{lemma}[Involved Nodes in Routing] \label{lem:InvolvedNodes}
If $G_e = G^{HSkip+}$, the routing from some node $u$ to $w$ only uses nodes
$v$ with $v.bw \ge \min{\{u.bw, w.bw\}}$.
\end{lemma}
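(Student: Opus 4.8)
The plan is to trace through the two phases of the routing protocol described just before the lemma and observe that in each phase the bandwidth of the current node stays on the correct side of $\min\{u.bw, w.bw\}$. Recall that the routing first follows predecessor edges (nodes of strictly increasing bandwidth) until it reaches the "peak" node determined by the rule "always go to the farthest predecessor that extends the common prefix with $x$", and then follows successor edges (nodes of strictly decreasing bandwidth) down to the target $w$. So I would split the argument into: (i) the ascending phase, (ii) the descending phase, and (iii) checking the peak node itself.

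First I would handle the ascending phase. Starting at $u$, every hop taken by \emph{Lookup()} in this phase uses $\mathrm{localFarthestPred}(v, length, x.rs[length])$, which by Definition of the farthest neighbors is a node with strictly larger bandwidth than the current node; hence every node $v$ visited in this phase satisfies $v.bw \ge u.bw \ge \min\{u.bw, w.bw\}$, and in particular the peak node $p$ satisfies $p.bw \ge u.bw$. Second, for the descending phase I would argue symmetrically but relative to $w$: here the protocol uses successor edges, and I claim that the descent never drops below $w.bw$. This is the key point: once the common prefix with $x$ has reached its maximal attainable length, the node whose bandwidth is closest to $w.bw$ from above on the relevant list is exactly where the descent terminates — namely at $w$ itself — so every intermediate successor node $v$ in this phase has $v.bw > w.bw \ge \min\{u.bw, w.bw\}$. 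Formally I would use Theorem~\ref{the:Routing_Corretness}: the message provably reaches $w$, and in the descending phase bandwidth is monotonically decreasing, so the last node before termination has bandwidth $> w.bw$ and all earlier descending nodes have even larger bandwidth; thus no node in this phase has bandwidth below $w.bw$.

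The main obstacle I expect is making precise that the descending phase really does stop at $w$ without overshooting to a node of bandwidth $< w.bw$ — i.e., that the "switch" from predecessor-routing to successor-routing happens at a node whose bandwidth is $\ge \max\{u.bw, w.bw\}$ rather than somewhere below $w.bw$. To see this I would use the structure of HSkip+: at the peak node $p$, $\mathrm{commonPrefix}(p,x)$ has reached $\min(\mathrm{level}(p), \text{full match})$, and from $p$ the successor chain that extends the prefix toward $x$ visits nodes of decreasing bandwidth whose common prefix with $x$ weakly increases; since $w$ lies in the network and shares the full relevant prefix, $w$ appears on this chain, and the protocol halts at $w$ by the identifier check before ever reaching a node of smaller bandwidth. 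Combining the three pieces: every visited node has bandwidth $\ge u.bw$ (ascending and peak) or $\ge w.bw$ (descending), hence $\ge \min\{u.bw, w.bw\}$ in all cases, which is exactly the claim.
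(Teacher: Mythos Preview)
The paper does not actually supply a proof of this lemma; it is stated and then immediately followed by the dilation discussion, so there is nothing to compare your argument against line by line.

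That said, your plan is sound and is exactly the natural argument the paper's informal description of the routing (``first along predecessors \ldots\ then along successors'') invites. Two remarks. First, you take the strict two–phase shape (monotone up, then monotone down) as given from the text preceding the lemma; this is fine, but if you want the proof to be self-contained you should add one sentence showing that once the algorithm takes a successor step at some node $v$ (because $\mathrm{localFarthestPred}(v,i,x.rs[i])=\bot$), the next node $s$ cannot have any predecessor in its component at the new level either --- otherwise that predecessor would already have been a valid predecessor for $v$ at level $i$ with bit $x.rs[i]$, or would lie strictly between $v$ and $s$ contradicting the choice of $s$ as the closest such successor. Second, your ``no overshoot'' step can be made slightly sharper than appealing to Theorem~\ref{the:Routing_Corretness}: at any descending node $s$ with $s.bw>w.bw$, the target $w$ itself is a successor of $s$ in $\mathrm{component}(s,j)$ with the correct next bit, so $\mathrm{farthestSucc}(s,j,w.rs[j])$ --- being the \emph{closest} such successor --- has bandwidth at least $w.bw$. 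This gives the bound directly, step by step, without needing correctness of routing as a black box. With these two small additions your argument is complete.
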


The {\em dilation} is defined as the longest path which is needed to route a
packet from an arbitrary source to an arbitrary target node. Since it just
takes a single hop to climb up one level and there are at most $O(\log n)$
levels w.h.p., we get:

\begin{theorem}[Dilation of Routing] \label{the:Routing_Dilation}
In HSkip+ the dilation is at most $O(\log n)$ w.h.p.
\end{theorem}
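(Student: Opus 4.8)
The plan is to combine two ingredients: the network has only $O(\log n)$ levels w.h.p., and every hop of a lookup strictly increases the length of the common prefix shared with the target.

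For the first ingredient, since the bit strings $v.rs$ are chosen uniformly at random and independently, for any fixed pair of distinct nodes $u,w$ we have $\Pr[\mathrm{commonPrefix}(u,w)\ge k]=2^{-k}$. A union bound over the at most $\binom{n}{2}$ pairs gives $\Pr[\mathrm{level}(G)\ge k]\le n^2\cdot 2^{-k}$, so for $k=c\log n$ with a suitable constant $c$ we obtain $\mathrm{level}(G)\le c\log n$ w.h.p. For the second ingredient I would reuse the argument from the proof of Theorem~\ref{the:Routing_Corretness}: when a lookup for target $x$ sits at a node $v\neq x$ with $\mathrm{commonPrefix}(v,x)=i$, the fact that $x\in\mathrm{component}(v,i)$ forces $\mathrm{component}(v,i)$ to be non-trivial, so by Def.~\ref{def:Range} node $v$ has at least one level-$i$ neighbour and hence $\mathrm{level}(v)\ge i$; consequently the value $length$ computed in \emph{Lookup()} is exactly $i$, and the node $w$ to which the message is forwarded (whether it is a farthest predecessor or, when $v$ tops the predecessor chain, a farthest successor) satisfies $\mathrm{prefix}(w,i+1)=\mathrm{prefix}(x,i+1)$, i.e.\ $\mathrm{commonPrefix}(w,x)\ge i+1$.

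Putting these together: after $j$ hops of a lookup from an arbitrary source $u$ to an arbitrary target $x$, either $x$ has already been reached, or the current node shares a prefix of length at least $j$ with $x$. Since no two distinct nodes can share a prefix longer than $\mathrm{level}(G)$, the message must reach $x$ within $\mathrm{level}(G)+1$ hops; in particular all nodes on the path are distinct, so there is no cycling. Taking the maximum over all source–target pairs and using $\mathrm{level}(G)=O(\log n)$ w.h.p.\ yields a dilation of $O(\log n)$ w.h.p.

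The main obstacle is the bookkeeping in the second ingredient: one has to verify that at every intermediate node the parameter $length$ equals $\mathrm{commonPrefix}(v,x)$ rather than the smaller value $\mathrm{level}(v)$ (otherwise progress could stall), that the farthest-successor fallback branch of \emph{Lookup()} is always non-empty (it contains $x$ itself whenever the predecessor branch is empty, since then $x.bw<v.bw$), and that the common-prefix increase persists across the transition from the predecessor-climbing phase to the successor-descending phase. All three points follow from the legal-state structure captured in Def.~\ref{def:Range}, but they should be checked explicitly rather than taken for granted.
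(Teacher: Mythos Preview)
Your proposal is correct and follows essentially the same approach as the paper: bound the number of levels by $O(\log n)$ w.h.p.\ and observe that each hop of \emph{Lookup()} increases the common prefix with the target by at least one. Your treatment is in fact more careful than the paper's own proof, which simply asserts the $O(\log n)$ level bound and the per-hop progress without the explicit union-bound calculation or the bookkeeping checks you flag in your final paragraph.
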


\begin{proof}
   In each routing step we use the edges of a higher level to forward the message to the next hop. As we have at most $O(\log n)$ levels, after $O(\log n)$ steps we have reached the highest level. At each step we have adjusted the bit string to reach the target bit string. At the highest level we have connections to all nodes with the same bit string and reach our target node directly. If this were not the case a further level would be included in the network because there would be a non trivial component (cf. Def. \ref{def:Level}). Therefore, we need $O(\log n)$ steps to route a message from an arbitrary source to an arbitrary target and our dilation is $O(\log n)$.
\end{proof}

Finally, we analyze the congestion of our routing strategy. The {\em
congestion} of a node $v$ is equal to the total volume of the messages passing
it divided by its bandwidth. Let us consider the following routing problem:
Each node $u$ selects a target bitstring $x$ independently and uniformly at
random and sends a message of volume $\min\{u.bw,w.bw\}$ to the node $w$ with
the longest prefix match with $x$. Of course, $u$ may not know that volume in
advance. Our goal will just be to show that the expected congestion of this
routing problem is $O(\log n)$. If this is the case, then one can show similar
to \cite{Bhargava:2004:PDO:1007912.1007938} that the expected congestion of
routing any routing problem in HSkip+ is by a factor of at most $O(\log n)$
worse than the congestion of routing that routing problem in any other
topology of logarithmic degree, which matches the result in
\cite{Bhargava:2004:PDO:1007912.1007938}.

\begin{theorem}[Congestion of Routing] \label{the:Routing_Congestion}
In the HSkip+ topology, we can route messages with the presented routing
algorithm (cf. Alg. \ref{alg:Lookup}) and the defined routing problem with an
expected congestion of $O(\log n)$.
\end{theorem}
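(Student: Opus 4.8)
The plan is to fix an arbitrary node $v$, bound the expected total volume $\mathrm{vol}(v)$ of messages that $v$ handles (as source, relay, or destination), and show $E[\mathrm{vol}(v)]=O(v.bw\cdot\log n)$; since $v$'s congestion is by definition $\mathrm{vol}(v)/v.bw$, this establishes the theorem for $v$, and by linearity of expectation for all nodes simultaneously. I would work throughout conditioned on the high-probability event $\mathcal{B}$ that the immutable random bit strings are ``well spread'': $\mathrm{level}(G)=O(\log n)$ (so every route has length $O(\log n)$ by Theorem~\ref{the:Routing_Dilation}), every class $\mathrm{component}(x,i)$ has $\Theta(n\,2^{-i})$ members whenever this is $\Omega(\log n)$, every degree is $O(\log n)$, and the set of bit strings whose longest prefix match is a fixed node $w$ has relative measure $O(1/n)$; the last point gives $\Pr[w_u=w\mid\mathcal B]=O(1/n)$, where $w_u$ denotes the random destination of the message originating at source $u$.

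Write $\mathrm{vol}(v)=\sum_u\beta_u\cdot\mathbb{1}[v\in P_u]$, where $u$ ranges over the $n$ sources, $P_u$ is the route of $u$'s message, and $\beta_u=\min\{u.bw,w_u.bw\}$ is its volume. Lemma~\ref{lem:InvolvedNodes} implies that $v\in P_u$ forces $v.bw\ge\beta_u$, so every message touching $v$ has volume at most $v.bw$; the contribution of $v$'s own message and of all messages destined for $v$ is thus $\le v.bw\cdot(1+\#\{u:w_u=v\})=O(v.bw)$ in expectation under $\mathcal B$. It remains to bound the relaying work. A node $y$ on $P_u$ with $\mathrm{commonPrefix}(y,w_u)=i$ forwards the message to the bandwidth-nearest neighbour of $y$ inside $\mathrm{component}(y,i)$ whose $(i{+}1)$st bit equals $w_u.rs[i]$; since this flips the $i$th bit, at most one node of $P_u$ has common prefix exactly $i$ with $w_u$. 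Hence $\mathrm{vol}(v)\le O(v.bw)+\sum_{i=0}^{\mathrm{level}(G)}\mathrm{relay}_i(v)$ with $\mathrm{relay}_i(v)=\sum_{u:\ v\in P_u,\ \mathrm{commonPrefix}(v,w_u)=i}\beta_u$, and it suffices to prove $E[\mathrm{relay}_i(v)\mid\mathcal B]=O(v.bw)$ for each $i$, so that summing over the $O(\log n)$ levels gives the bound.

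Fix $i$. For $v$ to relay $u$'s message at level $i$ it is necessary that $\mathrm{commonPrefix}(v,w_u)=i$, an event of probability $\Theta(2^{-i})$ over $u$'s uniformly random target, and that the part of $P_u$ preceding $v$ --- which inspects only bits of $w_u$ at positions $<i$, and those agree with the bits of $v.rs$, so that it coincides with the corresponding part of the deterministic route from $u$ to $v$ --- actually terminates at $v$ rather than at an earlier node already sharing $\ge i$ bits with $w_u$; whether this holds is a property of $u$ alone, and I call such $u$ \emph{$i$-relevant}. On that event $\beta_u\le v.bw$ by the previous paragraph, so
\[
E[\mathrm{relay}_i(v)\mid\mathcal B]\ \le\ v.bw\cdot\Theta(2^{-i})\cdot\#\{u:\ u\ i\text{-relevant}\},
\]
and the crux is the estimate $\#\{u:\ u\ i\text{-relevant}\}=O(2^i)$. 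I would obtain it from the bandwidth-locality of the routing: the route from an $i$-relevant $u$ to $v$ reaches $v$ through hops taken at strictly increasing levels $\ell_0<\ell_1<\dots<\ell_{m-1}<i$, and a hop taken at a node with common prefix $\ell$ with the target jumps to the bandwidth-nearest in-component neighbour with the prescribed next bit, hence across $O(2^{\ell})$ positions of the global bandwidth order (here $\mathcal B$ bounds the local density of the prefix class, which has size $\Theta(n2^{-\ell})$). Summing the geometric series, $u$ and $v$ are $O(\sum_{\ell<i}2^{\ell})=O(2^i)$ bandwidth ranks apart, so there are only $O(2^i)$ candidates, whence $E[\mathrm{relay}_i(v)\mid\mathcal B]=O(v.bw)$ and, after summation, $E[\mathrm{vol}(v)\mid\mathcal B]=O(v.bw\log n)$.

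The step I expect to be the main obstacle is exactly this last one: turning the intuition ``each hop is bandwidth-local'' into a worst-case bound on the cumulative displacement, i.e.\ controlling the ``overshoot'' past $\max\{u.bw,w_u.bw\}$ so that the $O(2^{\ell})$-per-level estimate genuinely holds and the displacement telescopes, together with the (routine but careful) verification that $\mathcal B$ has high probability. Once the random routing problem is handled, the extension to arbitrary routing problems --- losing only the stated factor $O(\log n)$ --- follows the congestion argument of Bhargava et al.~\cite{Bhargava:2004:PDO:1007912.1007938}.
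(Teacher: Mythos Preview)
Your high-level decomposition---split the route into an upward predecessor phase and a downward successor phase, charge the volume handled at $v$ to the level $i$ at which the route visits $v$, bound the volume of every message touching $v$ by $v.bw$ via Lemma~\ref{lem:InvolvedNodes}, and then show that the expected level-$i$ contribution is $O(v.bw)$---is exactly the structure of the paper's proof (Lemmas~\ref{lemma:RoutingCongestionLeveli} and~\ref{lemma:RoutingCongestionNode}). The difference lies in how the per-level bound is obtained, and here the paper's argument is both simpler and sidesteps the obstacle you flag.

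For the upward phase the paper observes directly that the sources whose message can reach $v$ at level $i$ are precisely the nodes in the bandwidth interval strictly between $v$ and the nearest successor $w'$ of $v$ sharing $v$'s $i$-bit prefix: every upward hop goes to the closest predecessor with the next required bit, and since $v$ itself has all the required bits no hop can overshoot $v$; symmetrically, no source below $w'$ can overshoot $w'$. The number $m$ of nodes in this interval is geometric with parameter $2^{-i}$ over the random bit strings, each such source routes through $v$ only when its uniformly random target shares $v$'s $i$-bit prefix (probability $2^{-i}$), and the volume is at most $v.bw$; hence $E[\mathrm{relay}_i(v)]\le v.bw\cdot 2^{-i}\cdot E[m{+}1]=v.bw$. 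The downward phase is handled symmetrically by reversing the roles of source and target. No conditioning on a well-spreadness event and no displacement telescoping are needed.

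Your route---condition on $\mathcal B$ and bound $\#\{u:u\ i\text{-relevant}\}$ deterministically by $O(2^i)$ via the cumulative hop displacement---runs into precisely the difficulty you anticipate, and it is not merely a matter of care: the gap $m$ between $v$ and $w'$ is geometric with mean $2^i$, so under any union-bound-style $\mathcal B$ you will only get $m=O(2^i\log n)$ uniformly over all $(v,i)$, which costs an extra logarithmic factor in the final bound. The same issue already appears in your per-hop estimate, since $\mathcal B$ as you state it controls component \emph{sizes} but not the local \emph{spacing} of consecutive component members in the bandwidth order, so the $O(2^\ell)$ displacement per hop does not follow. The clean fix is simply not to condition: take the expectation over the random bit strings together with the random target, which replaces the problematic worst-case count by the easy $E[m]=\Theta(2^i)$ and lets the rest of your outline go through verbatim.
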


To prove the congestion we divide the routing process of a message from an
arbitrary source node $u$ to an arbitrary target node $w$ into two parts: On
the one hand the routing from $u$ along predecessor edges to an intermediate
node $v$ which has no better predecessors and on the other hand the routing
back over successors from node $v$ to the target node $w$. We will start with
the first part of the routing process, and initially we separately look at
each level $i$ in the network at node $v$. Later, we will sum it up for all
levels.

\begin{figure*}[tb]
\vspace{-0.25cm}
\begin{minipage}[b]{0.48\linewidth}
\centering
\includegraphics[width=\textwidth]{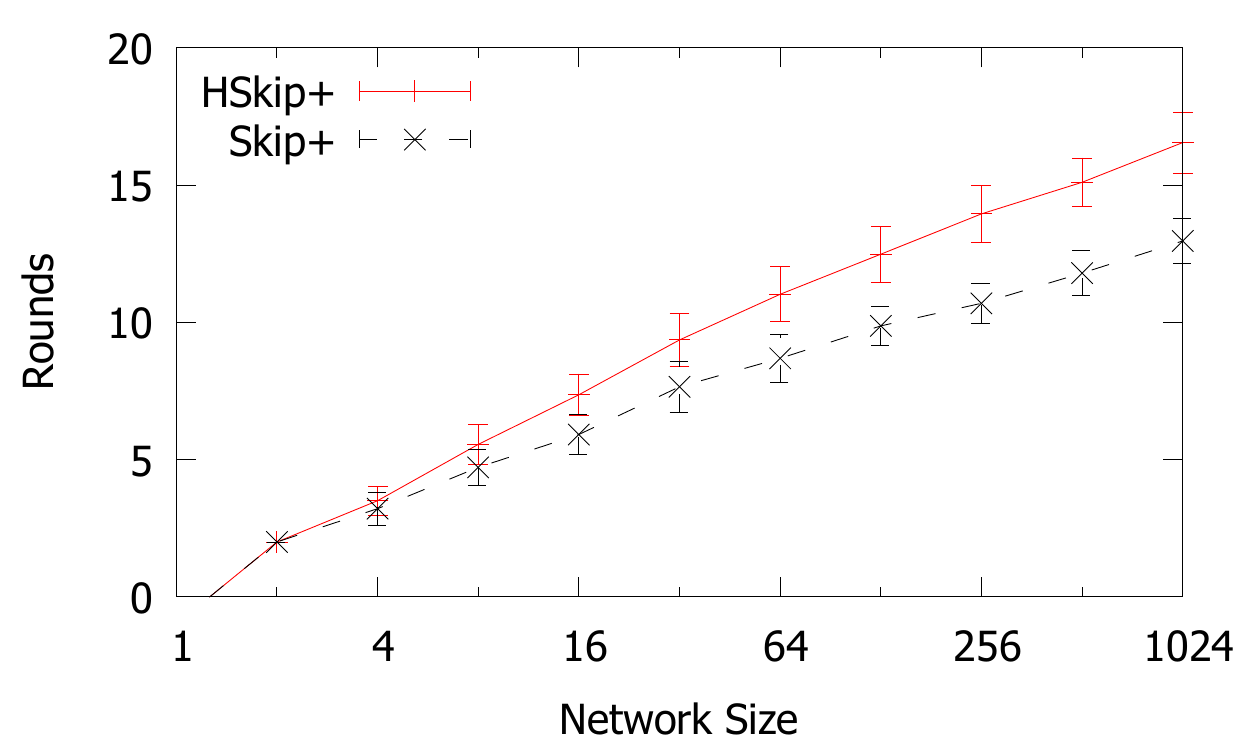}
\vspace{-0.75cm}
	\caption[Stabilization Time]{Stabilization Time.}
	\label{fig:Stabilization}
	
\end{minipage}
\hspace{0.5cm}
\begin{minipage}[b]{0.48\linewidth}
\centering
\includegraphics[width=\textwidth]{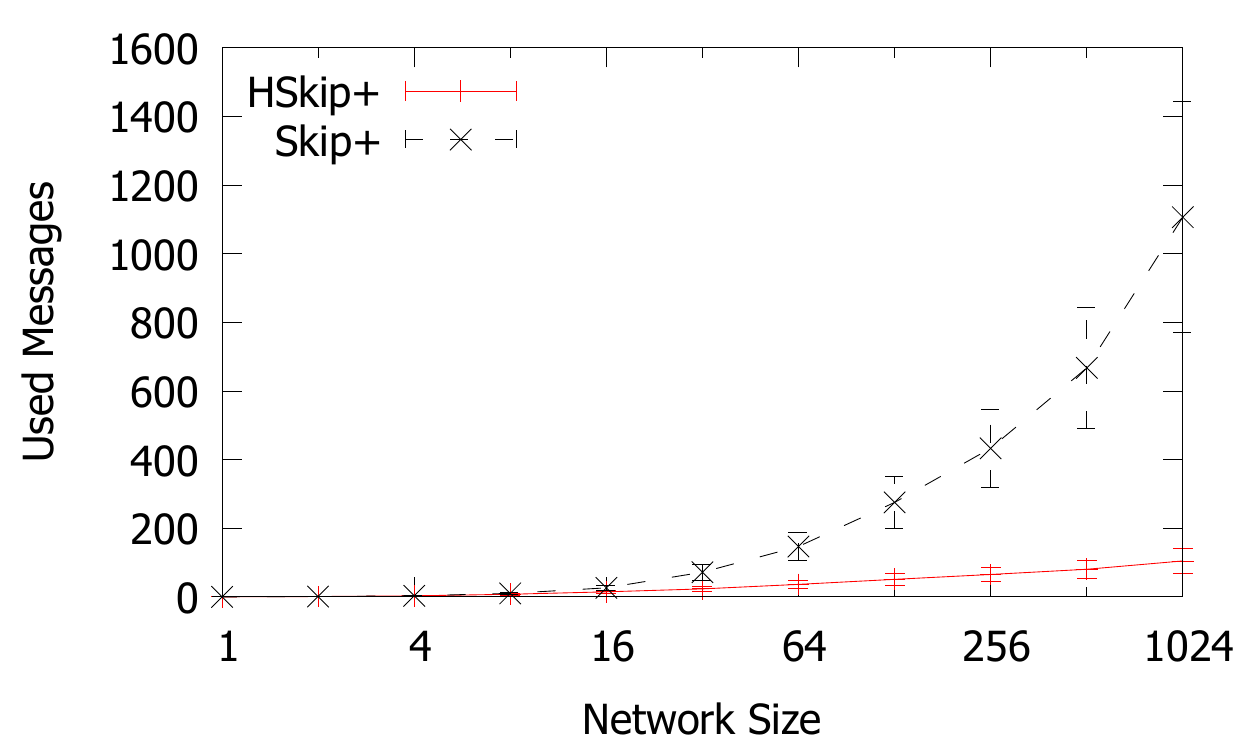}
\vspace{-0.75cm}
	\caption[Overall Used Messages Total]{Overall Used Messages Total.}
	\label{fig:UsedMessages}
\end{minipage}
\vspace{-0.5cm}
\end{figure*}

\begin{figure*}[tb]
\vspace{-0.25cm}
\begin{minipage}[b]{0.48\linewidth}
\centering
\includegraphics[width=\textwidth]{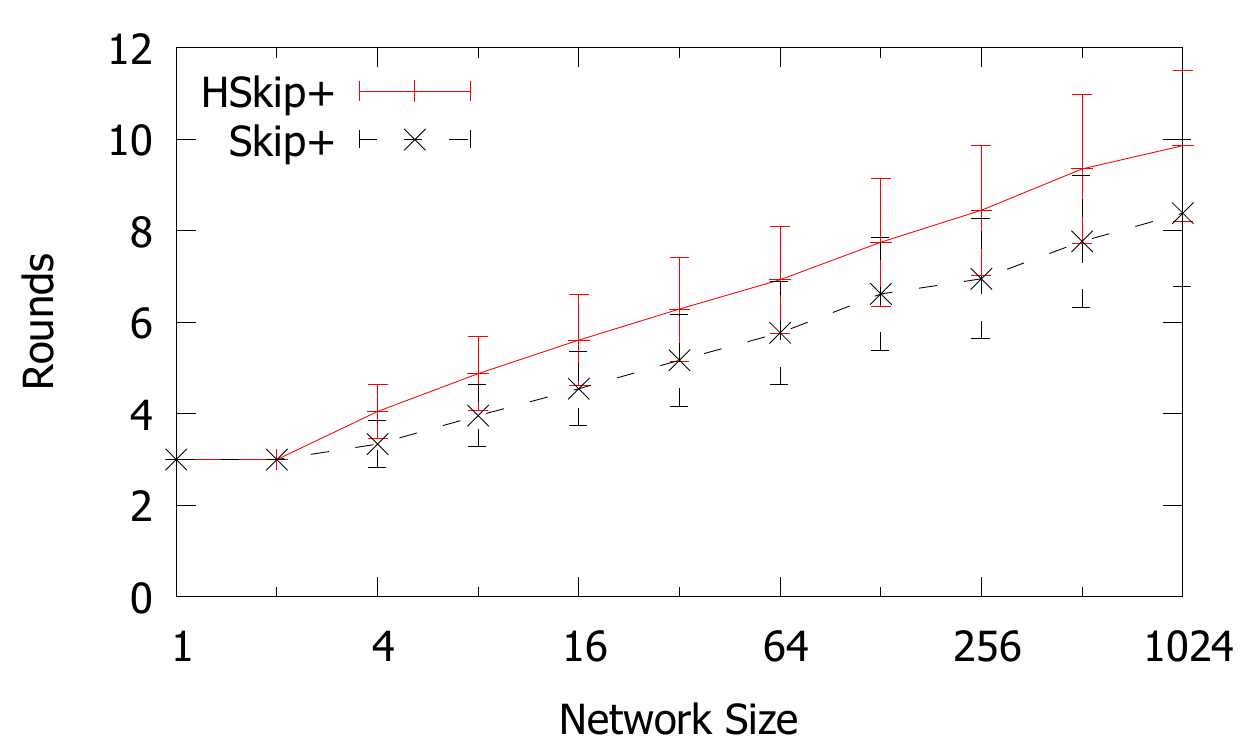}
\vspace{-0.75cm}
	\caption[Stabilization Time after Join]{Stabilization Time after Join.}
	\label{fig:JoinStabilization}
\end{minipage}
\hspace{0.5cm}
\begin{minipage}[b]{0.48\linewidth}
\centering
\includegraphics[width=\textwidth]{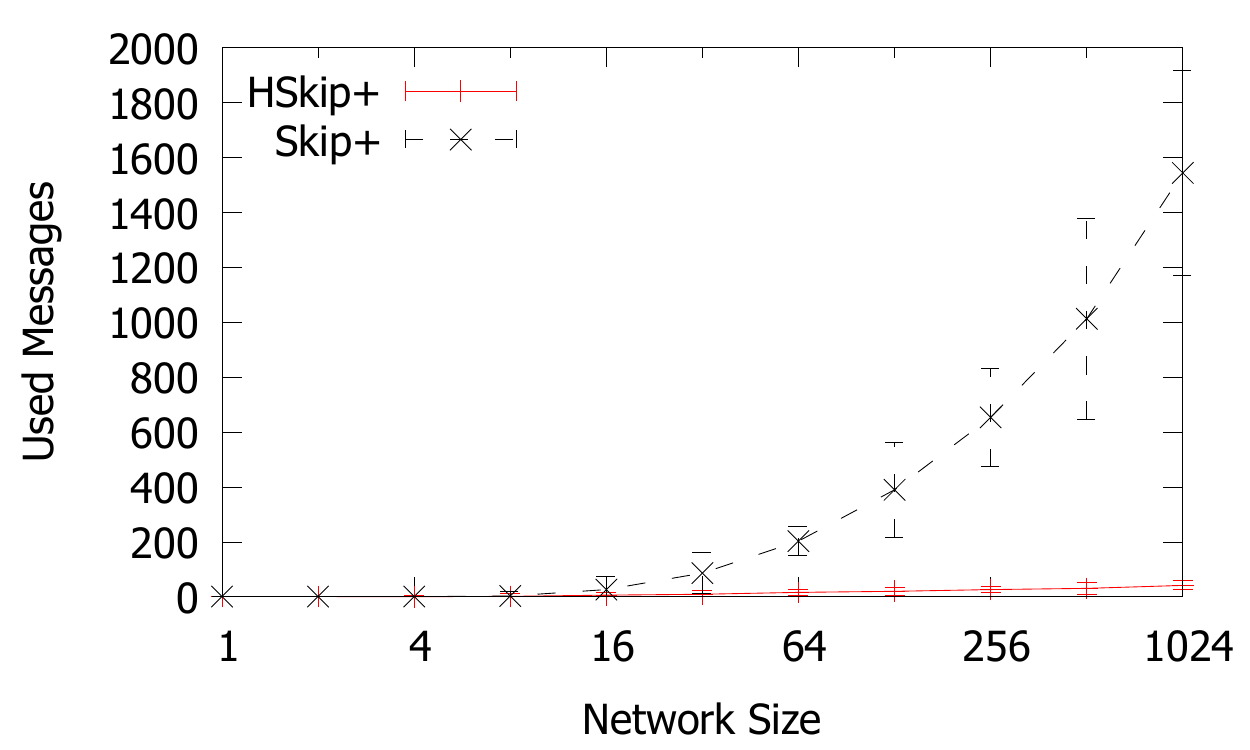}
\vspace{-0.75cm}
	\caption[Used Messages after Join]{Used Messages after Join.}
	\label{fig:JoinUsedMessages}
\end{minipage}
\vspace{-0.5cm}
\end{figure*}

\begin{lemma}[Congestion of node $v$ at level $i$] \label{lemma:RoutingCongestionLeveli}
In our routing problem (cf. The. \ref{the:Routing_Congestion}), every node $v$
has an expected congestion of $O(1)$ at every level $i \ge 1$.
\end{lemma}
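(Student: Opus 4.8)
The plan is to fix a node $v$ and a level $i \ge 1$ and bound the expected total volume of messages that traverse $v$ while using a level-$i$ edge during the predecessor phase of the routing, then divide by $v.bw$. First I would observe that a message originating at node $u$ passes through $v$ on a level-$i$ hop only if $u$ lies in $\mathrm{component}(v,i)$ and $v$ lies on the canonical predecessor path of $u$ at that level; in particular $u.bw < v.bw$, so the message volume $\min\{u.bw, w.bw\} \le u.bw < v.bw$. The key structural fact is that, because the routing climbs one level per hop, a message uses the level-$i$ edges of at most one node on each side, and the set of messages that can reach $v$ via a level-$i$ predecessor hop comes from the successors of $v$ within $\mathrm{component}(v,i)$ together with a bounded "window" determined by the range (Def.~\ref{def:Range}): only nodes $u$ with $u.bw$ between $v.bw$ and $\mathrm{farthestSucc}(v,i).bw$, i.e. the $\mathrm{succs}(v,i)$, can forward directly to $v$ at level $i$, and by the structure of skip graphs each such $u$ routes through $v$ only if $\mathrm{commonPrefix}(u,x) = i$ for its random target $x$.

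Next I would quantify the randomness. For a fixed source $u \in \mathrm{succs}(v,i)$, the event that its message crosses the level-$i$ edge into $v$ requires that the target bitstring $x$ agrees with $v.rs$ on exactly the first $i$ bits and then selects the branch containing $v$; since $x$ is uniform, this has probability $\Theta(2^{-i})$, and moreover, conditioned on reaching level $i$ at all, the contribution is at most $v.bw$ in volume. Summing over the $\mathrm{succs}(v,i)$: the number of nodes that share a prefix of length $i$ with $v$ and whose level-$i$ predecessor path passes through $v$ is, by the standard skip-graph counting argument, $O(2^{-i} \cdot |\mathrm{component}(v,i)|) + O(1)$ in expectation, but each of these contributes volume at most $v.bw$ only with the probability $\Theta(2^{-i})$ that its random target actually forces the level-$i$ hop through $v$. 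Multiplying the count of candidate sources by the per-source probability and the per-message volume bound $v.bw$, and then dividing by $v.bw$, yields an expected congestion contribution of $O(1)$ at level $i$. The same argument applies symmetrically to the successor phase (routing back down), contributing another $O(1)$.

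The main obstacle I anticipate is making the "only one node's level-$i$ edges are used per side" claim fully rigorous and correctly identifying which sources can route through $v$ at level $i$ — this requires carefully using the invariant from Theorem~\ref{the:Routing_Corretness} that $\mathrm{commonPrefix}$ with the target strictly increases each hop, together with Lemma~\ref{lem:InvolvedNodes} to control which bandwidths appear, so that the candidate source set is exactly $\{u : \mathrm{commonPrefix}(u,x)=i,\ v \text{ on } u\text{'s level-}i \text{ pred-path}\}$ and has the claimed expected size. A secondary subtlety is that the message volume $\min\{u.bw,w.bw\}$ depends on the (random) target $w$ as well; here I would simply use the crude bound $\min\{u.bw,w.bw\} \le u.bw < v.bw$, which suffices because the $2^{-i}$ factor from the prefix-match probability already gives the geometric decay needed to sum over all levels $i \ge 1$ to $O(\log n)$ in the final Theorem~\ref{the:Routing_Congestion}.
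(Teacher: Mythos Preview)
Your high-level strategy matches the paper's: bound each message's volume by $v.bw$ (using Lemma~\ref{lem:InvolvedNodes}), use the probability $2^{-i}$ that a uniformly random target bitstring matches $\mathrm{prefix}(v,i)$, and count the sources that can route through $v$ at level $i$. Where your argument breaks down is in identifying and counting that source set.

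The relevant sources are \emph{not} the nodes in $\mathrm{succs}(v,i)$ (that is the level-$i$ range of $v$, which only tells you who can hand a message directly to $v$, not whose message it is). The paper instead looks at the level-$0$ ordering: the sources whose predecessor-phase path reaches $v$ at level $i$ are exactly the nodes lying between $v$ and the closest level-$0$ successor $w'$ of $v$ with $\mathrm{prefix}(w',i)=\mathrm{prefix}(v,i)$. The size $m$ of this gap is a geometric random variable with success probability $2^{-i}$ (each successive node independently matches the $i$-bit prefix with probability $2^{-i}$). One then gets
\[
E[X_v^i] \;=\; \sum_{m\ge 0} \underbrace{(1-2^{-i})^{m}\,2^{-i}}_{\Pr[\text{gap}=m]}\cdot \underbrace{(m+1)\cdot 2^{-i}}_{\substack{\text{expected \# of those}\\ \text{sources whose target}\\\text{matches }\mathrm{prefix}(v,i)}}\cdot\, v.bw \;=\; v.bw,
\]
hence congestion $O(1)$ at level $i$.

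Your count ``$O(2^{-i}\cdot|\mathrm{component}(v,i)|)+O(1)$'' does not give this: since $E[|\mathrm{component}(v,i)|]=\Theta(n\,2^{-i})$, your expression is $\Theta(n\,2^{-2i})$, and multiplying by a further $2^{-i}$ and by $v.bw$ yields $\Theta(n\,2^{-3i})\cdot v.bw$, which is not $O(v.bw)$ for small $i$. The missing idea is precisely the geometric gap argument on the level-$0$ ordering; restricting to $\mathrm{component}(v,i)$ or to $\mathrm{succs}(v,i)$ loses the information about which sources actually funnel through $v$ rather than through some other node with the same $i$-bit prefix. You also appear to conflate two different $2^{-i}$ events (the source sharing a prefix with $v$ versus the \emph{target} sharing a prefix with $v$); only the latter matters, and the source set is controlled by position in the bandwidth ordering, not by its bitstring.
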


\begin{proof}
We assume w.l.o.g. that the first $i$ bits of $v.rs$ are $1$. So messages with
the target bitstring starting with $1\ldots 1$ as the first $i$ bits will be
sent through the node $v$ as long as they start at a node $w$ that, at level
0, is between $v$ and the closest successor $w'$ of $v$ whose first $i$ bits
are also 1. We calculate the probability that there are $m$ nodes between $v$
and $w'$. For each node in between we have the probability
$(1-(\frac{1}{2})^i)$ that one of its first $i$ bits differs from 1, and for
$w'$ we get a probability of $(\frac{1}{2})^i$ that its first $i$ bits are 1.
Altogether, we get a probability that there are $m$ sources that may send
their messages to $v$ of $(1-(\frac{1}{2})^i)^m \cdot (\frac{1}{2})^i =
\frac{(2^i - 1)^m}{(2^i)^{m+1}}$.
Furthermore, we have to estimate the fraction of messages which will be sent
through $v$. Each message is sent through $v$ if the first $i$ bits are $1$,
which has a probability of $\frac{1}{2^i}$. Since we have $m$ possible sources
(excluding $v$), a fraction of $(m+1) \cdot \frac{1}{2^i}$ messages will be
sent through $v$. Lastly, we have to look at the volume sent by a single
source. We know that a node $s$ can send at most a volume of $s.bw$.
As we have only successors of $v$ as sources and for all successors $s$ it
holds that $s.bw < v.bw$, we can upper bound the volume sent by each source by
$v.bw$. Altogether, if we have $m$ nodes between $v$ and $w'$, the total
expected volume through $v$ at level $i$ is at most $(m+1) \cdot \frac{1}{2^i}
\cdot v.bw$.
Let the random variable $X_v^i$ be the total volume sent through $v$ at level
$i$. Then
\[
  E\left[X_v^i\right] = \sum_{m=0}^{\infty}{\frac{(2^i-1)^m}{(2^i)^{m+1}} \cdot (m+1)
  \cdot \frac{1}{2^i} \cdot v.bw} = O(v.bw)
\]
\end{proof}

Summing up the congestion over all levels, we get:

\begin{lemma}[Congestion of node $v$] \label{lemma:RoutingCongestionNode}
In our routing problem (cf. The. \ref{the:Routing_Congestion}), an arbitrary
node $v$ has an expected congestion of $O(\log{n})$.
\end{lemma}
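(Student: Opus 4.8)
The plan is to bound the expected total volume of messages that pass through $v$ and then divide by $v.bw$. Recall that the route of a message from $u$ to $w$ first climbs along predecessor edges to an intermediate node of locally maximal bandwidth and then descends along successor edges to $w$; at every hop the common prefix with the target strictly increases, and from a given node the next hop is a deterministic function of that common prefix, so each message traverses $v$ at most once, and when it does it leaves $v$ via exactly one level-$i$ edge for a single index $i\in\{0,1,\dots,\mathrm{level}(v)\}$. Writing $X_v^i$ (resp.\ $Y_v^i$) for the total volume of messages that leave $v$ along a level-$i$ predecessor (resp.\ successor) edge, the volume through $v$ therefore equals $\sum_{i=0}^{\mathrm{level}(v)}(X_v^i+Y_v^i)$, and the expected congestion of $v$ is $\frac{1}{v.bw}\,E\bigl[\sum_i (X_v^i+Y_v^i)\bigr]$.

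I would first invoke Lemma~\ref{lemma:RoutingCongestionLeveli}, which gives $E[X_v^i]=O(v.bw)$ for every $i\ge 1$. The successor term $Y_v^i$ is handled by the mirror-image argument, swapping the roles of predecessors and successors: the only place in the proof of Lemma~\ref{lemma:RoutingCongestionLeveli} where ``the sources are successors of $v$, hence of smaller bandwidth'' was used is to cap each source's contributed volume by $v.bw$, and for the descending phase this is instead supplied by Lemma~\ref{lem:InvolvedNodes}, since $v$ lies on the route of any message counted in $Y_v^i$, so that message has volume at most $v.bw$; hence $E[Y_v^i]=O(v.bw)$. The level-$0$ term is dominated by a simpler instance of the same computation, because only the $O(1)$ nodes that are adjacent to $v$ in the global bandwidth order at level $0$ can funnel an ascending or descending message through $v$ using a level-$0$ edge, each of volume at most $v.bw$, so $E[X_v^0+Y_v^0]=O(v.bw)$ as well.

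It then remains to sum these per-level bounds, and this is the step that needs care: Lemma~\ref{lemma:RoutingCongestionLeveli} gives a bound uniform in $i$, so naively adding $O(v.bw)$ over all levels diverges. The fix is to use that $X_v^i=Y_v^i=0$ whenever $i>\mathrm{level}(v)$, together with the fact that the number of non-trivial levels satisfies $\mathrm{level}(G)=O(\log n)$ w.h.p. To make this legitimate inside an expectation, I would complement Lemma~\ref{lemma:RoutingCongestionLeveli} with a tail estimate for large $i$: there are at most $n$ potential source nodes on each side of $v$, and such a source routes a message through $v$ at level $i$ only if its uniformly random target agrees with $v$ on the first $i$ bits, an event of probability $2^{-i}$; hence $E[X_v^i],E[Y_v^i]\le n\,2^{-i}\,v.bw$, giving $E[X_v^i+Y_v^i]=O\bigl(\min\{1,\,n\,2^{-i}\}\,v.bw\bigr)$. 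Splitting the sum at $i^\star=\lceil\log_2 n\rceil$ yields $\sum_{i\le i^\star}O(v.bw)+v.bw\sum_{i>i^\star}n\,2^{-i}=O(v.bw\log n)+O(v.bw)$, so $E\bigl[\sum_i(X_v^i+Y_v^i)\bigr]=O(v.bw\log n)$ and the expected congestion of $v$ is $O(\log n)$.

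The main obstacle, as just indicated, is precisely this summation: the per-level lemma must be combined with a tail bound on $E[X_v^i]$ (equivalently, the w.h.p.\ bound on the number of levels must be handled carefully at the level of expectations), since the naive sum of the per-level estimates is not finite. The remaining two ingredients — the symmetric treatment of the descending phase via Lemma~\ref{lem:InvolvedNodes}, and the degenerate level-$0$ case — are routine, but should be stated explicitly so that the whole of $\sum_{i=0}^{\mathrm{level}(v)}(X_v^i+Y_v^i)$ is accounted for.
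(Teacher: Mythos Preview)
Your proposal is correct and follows the same approach as the paper, namely summing the per-level $O(1)$ bound of Lemma~\ref{lemma:RoutingCongestionLeveli} over the $O(\log n)$ levels; the paper's own proof is in fact just the one-line remark ``Summing up the congestion over all levels'' preceding the lemma, with the descending phase deferred to the proof of Theorem~\ref{the:Routing_Congestion}. Your treatment is strictly more careful than the paper's, since you make explicit the tail bound needed to turn the w.h.p.\ level count into a genuine expectation, and you fold the descending phase and the degenerate $i=0$ case into the same lemma rather than postponing them.
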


\begin{proof}[Proof of Theorem \ref{the:Routing_Congestion}]
With the previous lemmas (cf. Lem. \ref{lemma:RoutingCongestionLeveli} and
Lem. \ref{lemma:RoutingCongestionNode}) we have calculated the congestion
caused by the first part of the routing process.
For the second part, we can argue in a similar way by looking at the routing
path backwards from the target node $x$ to the intermediate node $v$ and we
also get an expected volume of $O(\log n) \cdot v.bw$ at node $v$. Both routing parts together yield an expected congestion of $O(\log n) $ at
every node $v$.
\end{proof}

\begin{figure*}[tb]
\begin{minipage}[b]{0.48\linewidth}
\centering
\includegraphics[width=\columnwidth]{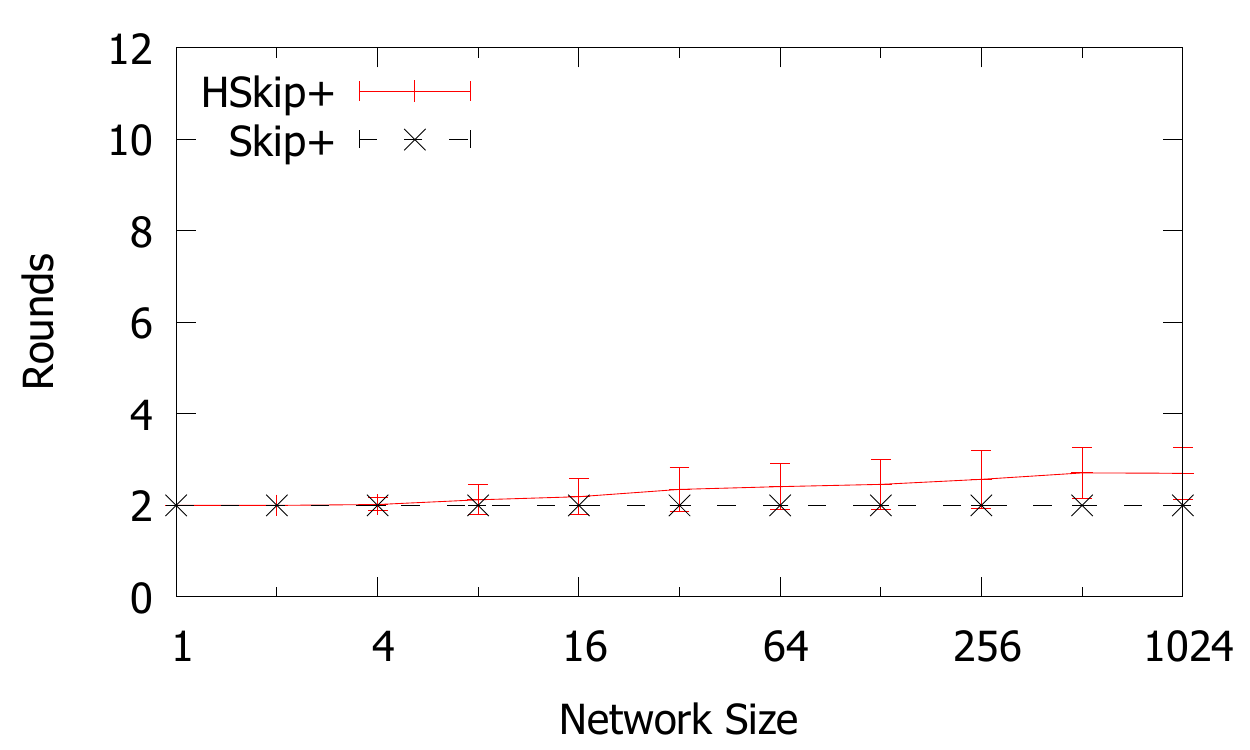}
	\vspace{-0.75cm}
	\caption[Stabilization Time after Leave]{Stabilization Time after Leave.}
	\label{fig:LeaveStabilization}
\end{minipage}
\hspace{0.5cm}
\begin{minipage}[b]{0.48\linewidth}
\centering
\includegraphics[width=\columnwidth]{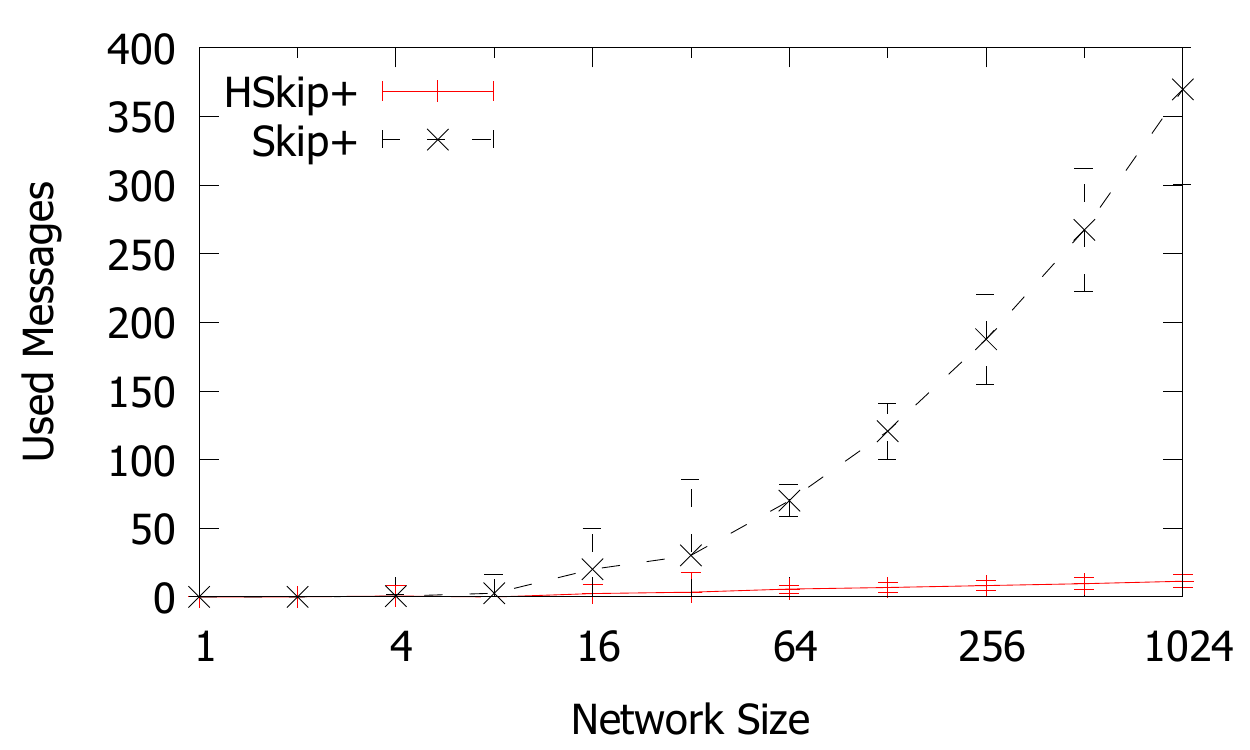}
	\vspace{-0.75cm}
	\caption[Used Messages after Leave]{Used Messages after Leave.}
	\label{fig:LeaveUsedMessages}
\end{minipage}
\vspace{-0.5cm}
\end{figure*}

\begin{figure*}[tb]
\vspace{-0.25cm}
\begin{minipage}[b]{0.48\linewidth}
\centering
\includegraphics[width=\columnwidth]{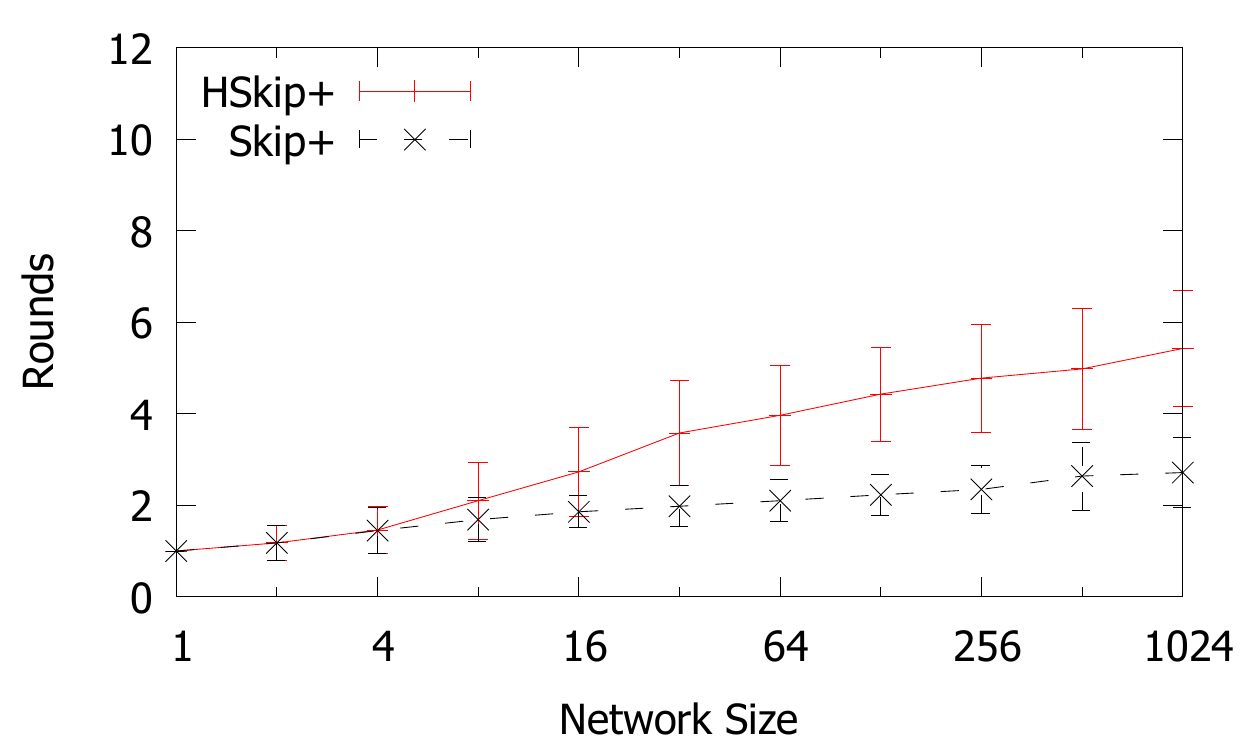}
	\vspace{-0.75cm}
	\caption[Stabilization Time after Change]{Stabilization Time after Change.}
	\label{fig:ChangeStabilization}
\end{minipage}
\hspace{0.5cm}
\begin{minipage}[b]{0.48\linewidth}
\centering
\includegraphics[width=\columnwidth]{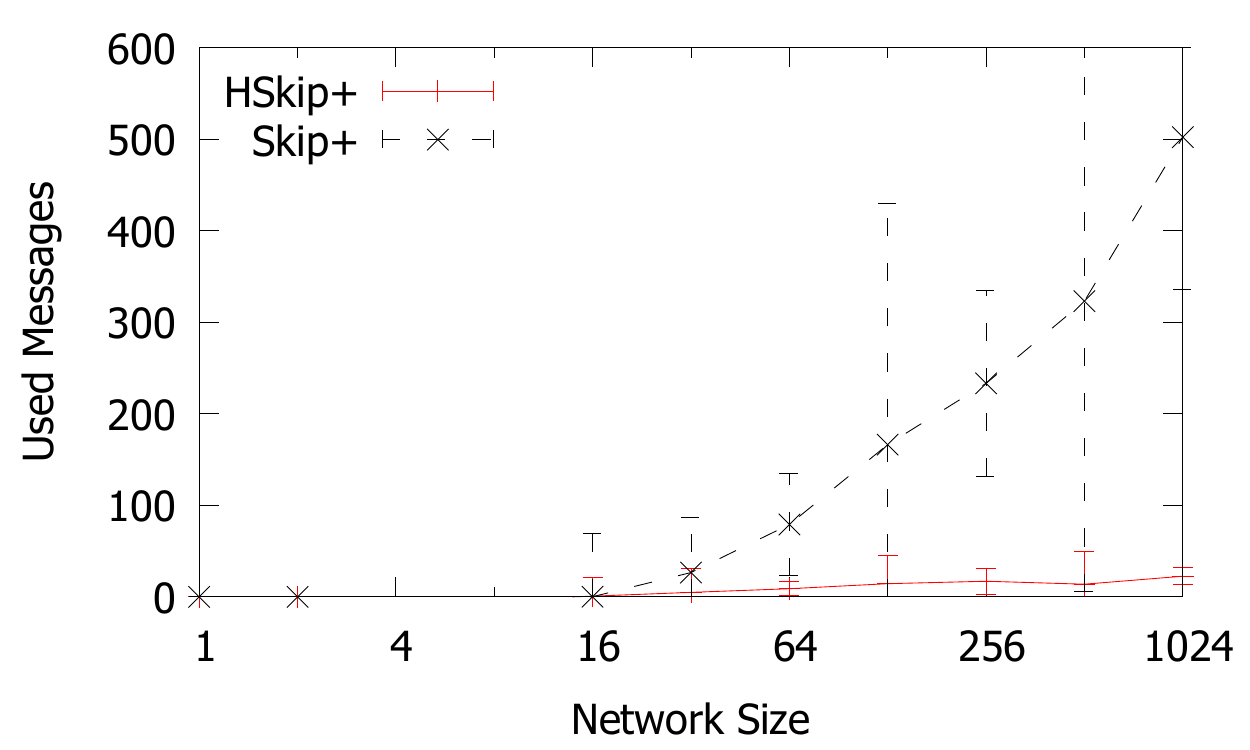}
	\vspace{-0.75cm}
	\caption[Used Messages after Change]{Used Messages after Change.}
	\label{fig:ChangeUsedMessages}
\end{minipage}
\vspace{-0.5cm}
\end{figure*}

\begin{figure*}[tb]
\vspace{-0.25cm}
\begin{minipage}[b]{0.48\linewidth}
\centering
\includegraphics[width=\columnwidth]{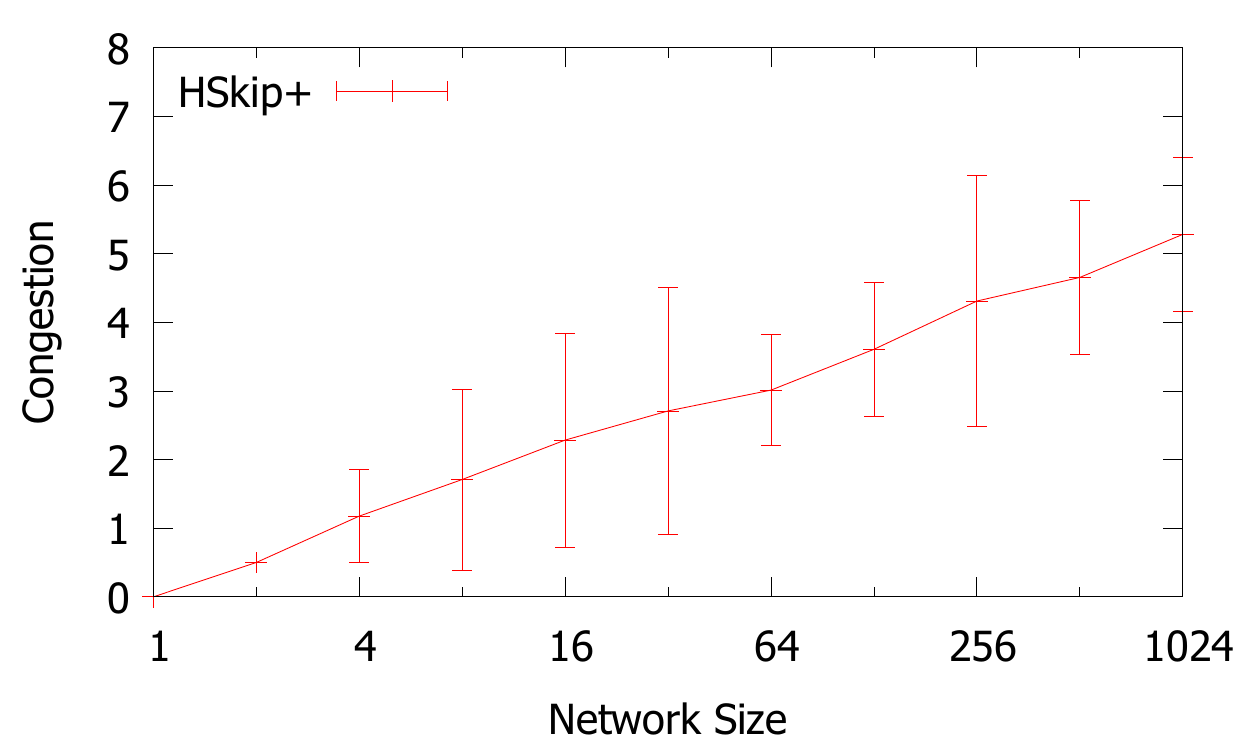}
	\vspace{-0.75cm}
	\caption[Routing Congestion of Flow Problem]{Routing Congestion of Flow Problem.}
	\label{fig:RoutingCongestion}
\end{minipage}
\hspace{0.5cm}
\begin{minipage}[b]{0.48\linewidth}
\centering
\includegraphics[width=\columnwidth]{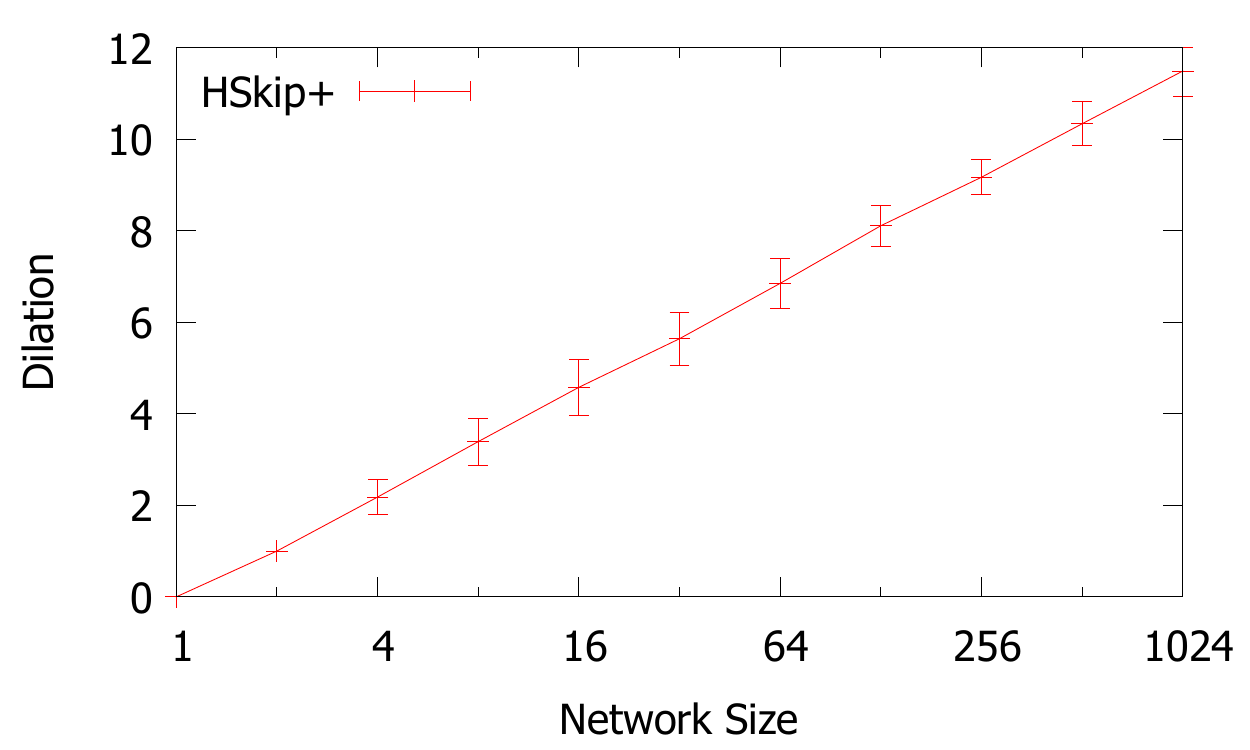}
	\vspace{-0.75cm}
	\caption[Routing Dilation of Flow Problem]{Routing Dilation of Flow Problem.}
	\label{fig:RoutingDilation}
\end{minipage}
\vspace{-0.5cm}
\end{figure*}

\begin{figure*}[tb]
\vspace{-0.25cm}
\begin{minipage}[b]{0.48\linewidth}
\centering
\includegraphics[width=\columnwidth]{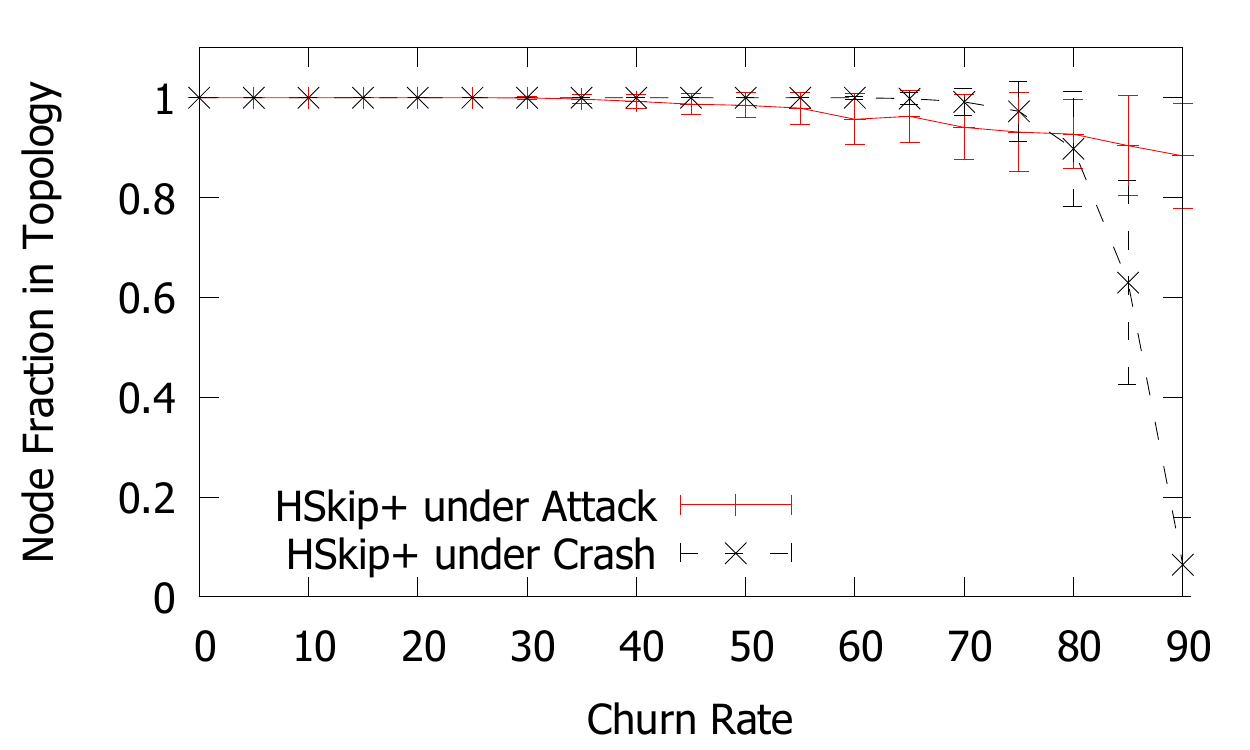}
	\vspace{-0.75cm}
	\caption[Stabilization after Crash and Attack]{Stabilization after Crash and Attack.}
	\label{fig:ChurnStabilization}
\end{minipage}
\hspace{0.5cm}
\begin{minipage}[b]{0.48\linewidth}
\centering
\includegraphics[width=\columnwidth]{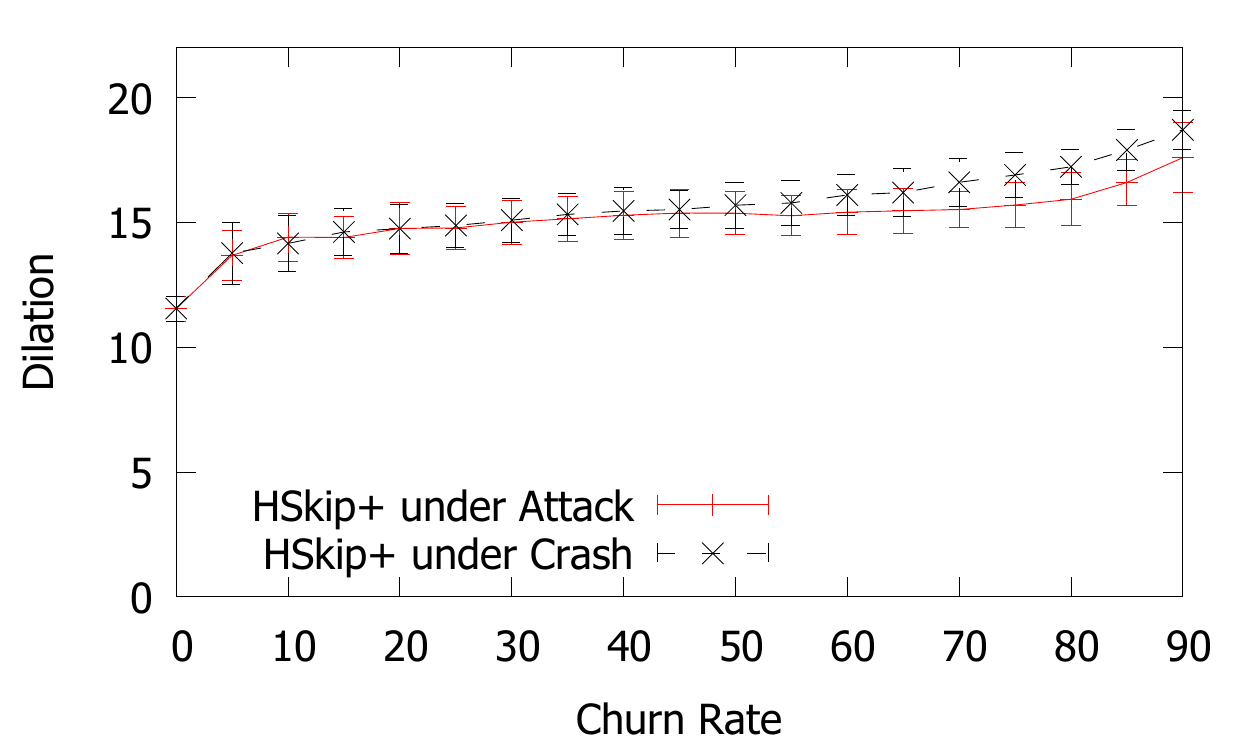}
	\vspace{-0.75cm}
	\caption[Routing Dilation during Crash and Attack]{Routing Dilation during Crash and Attack.}
	\label{fig:ChurnDilation}
\end{minipage}
\vspace{-0.5cm}
\end{figure*}

\section{Simulations}
\label{sec:Simulations}

In this section we present simulation results for the HSkip+ network in comparison to the Skip+ graph.
We simulated both protocols starting from randomly generated trees. The bandwidths were assigned randomly to the peers according to measurements of the connections in Germany \cite{BNetzA2013}.
All simulations are run 100 times with different seeds for network configurations with up to 1024 participating nodes. 
Since both networks are self-stabilizing, we focus on the stabilization costs in terms of rounds (in each round every node is allowed to process all received messages) until the desired topology is constructed from an initial weakly connected network and the number of messages which were used in this process.
The simulator can be found in the ancillary files of this paper on arXiv.org.

\subsection{Self-Stabilization}

If we look at the self-stabilization time for Skip+ and HSkip+ (cf. Fig. \ref{fig:Stabilization}), we see asymptotically similar results. Both networks need $O(\log n)$ rounds for the self-stabilization, whereas the Skip+ has slightly better results.
In contrast, if we look at the used messages (cf. Fig. \ref{fig:UsedMessages}), we see a big difference between both networks. While Skip+ uses $O(\log^4 n)$ messages for the self-stabilization process, the HSkip+ topology requires only $O(\log^2 n)$ messages.

\subsection{External Dynamics}

As the next step, we look at the external dynamics in the network, namely joining and leaving nodes, as well as bandwidth changes. 
We start with the needed rounds after a new node joined the network (cf. Fig. \ref{fig:JoinStabilization}). The result is similar to the self-stabilization we have seen before. Both networks are stabilized again after $O(\log n)$ rounds, where the Skip+ network is slightly faster.
Also the overall used messages for this process reflect the already seen result (cf. Fig. \ref{fig:JoinUsedMessages}). While the Skip+ network consumes $O(\log^4 n)$ messages, the rules of HSkip+ need only $O(\log^2 n)$ messages. The concrete number of messages are not directly comparable to the numbers in Fig.~\ref{fig:UsedMessages} as there are further messages in the system which cannot be excluded from counting.

Also the other two operations show similar results. The self-stabilization after a node left the network or changed its bandwidth can be finished in $O(\log n)$ rounds and nearly in constant time (cf. Fig. \ref{fig:LeaveStabilization} and \ref{fig:ChangeStabilization}). This tendency yields also for the used messages. The work consumed by HSkip+ is clearly less than in the case of Skip+ (cf. Fig. \ref{fig:LeaveUsedMessages} and \ref{fig:ChangeUsedMessages}).

\subsection{Routing}
\label{subsec:SimulationsRouting}

For evaluating the routing performance of the topology we look at a flow problem: For each node pair $u, v \in V$ node $u$ sends an amount of data of $\frac{u.bw \cdot v.bw}{\sum_{w\in V}{w.bw}}$ to node $v$. The average normalized congestion (according to the nodes' bandwidth) is logarithmic in the network size (cf. Fig.~\ref{fig:RoutingCongestion}). The same yields for the dilation of the routing process (cf. Fig.~\ref{fig:RoutingDilation}). Both results agree with the theoretical findings as proved in Sec.~\ref{sec:Routing}.

\subsection{Behavior under Churn}

As last aspect we look at the network behavior under churn. In a practical usage scenario of a peer-to-peer system nodes can arbitrary join and leave the network. Usual churn behavior as it was studied in several papers (i.e. \cite{DBLP:books/sp/wehrle2010/PussepLK10,DBLP:journals/ton/SteinerEB09}) has no influence on the topology as the stabilization times of our network are significantly smaller than the session and intersession times of peers in a network. Therefore, we focus on two exceptional scenarios: On the one hand we examined a crash and on the other hand an adversarial attack. In the first scenario x\% randomly chosen nodes are leaving from a network with 1024 nodes at the same time and the same number of new nodes join the network (to have a comparable size of the network). In the second scenario the leaving nodes are no longer randomly chosen, but all from a neighboring area.

The number of leaving nodes has no remarkable influence on the stabilization time as we have seen it in Fig.~\ref{fig:JoinStabilization}, Fig.~\ref{fig:LeaveStabilization} and Fig.~\ref{fig:ChangeStabilization}. Hence, we concentrate on the topology after stabilization and especially how many nodes are still connected correctly (cf. Fig.~\ref{fig:ChurnStabilization}). Up to a churn rate of about 35\% all nodes stay in the topology, under a random crash even up to a rate of 60\%. Over this limit, the topology looses nodes which are no longer reachable.
As for the second evaluation, we concentrate on the dilation. We have used the same flow problem as in Sec.~\ref{subsec:SimulationsRouting} starting at the same round as the attack and crash. In a stable network with 1024 nodes, it is about 11 hops (cf. Fig.~\ref{fig:RoutingDilation}). During the attack or crash it is just a little increased to about 14 or 15 hops; only at a very high churn rate (when we also have lost many nodes) the dilation increases noticeable.

\section{Conclusion}
\label{sec:Conclusion}

In this paper, we presented HSkip+, a self-stabilizing overlay network based on Skip+. We showed by simulations that the self-stabilization time is nearly identical in $O(\log n)$ while the improved version uses only $O(\log^2 n)$ messages for the process compared to the originally used $O(\log^4 n)$ messages. Also the dealing with external dynamics can be managed in the same time bounds and with the same work. Furthermore, we have extended the network to deal with heterogeneous bandwidths where we reach a logarithmic congestion and dilation in the routing process. Finally, the practical usage was shown by simulations under churn behavior.

\bibliographystyle{abbrv}
\bibliography{HSkip+}

\begin{thebibliography}{10}

\bibitem{Aspnes:2003:SG:644108.644170}
J.~Aspnes and G.~Shah.
\newblock Skip graphs.
\newblock In {\em Proceedings of the ACM-SIAM Symposium on Discrete Algorithms
  (SODA '03)}, pages 384--393. Society for Industrial and Applied Mathematics,
  2003.

\bibitem{Awerbuch:2004:HLD:982792.982836}
B.~Awerbuch and C.~Scheideler.
\newblock The hyperring: a low-congestion deterministic data structure for
  distributed environments.
\newblock In {\em Proceedings of the fifteenth annual ACM-SIAM symposium on
  Discrete algorithms}, SODA '04, pages 318--327, Philadelphia, PA, USA, 2004.
  Society for Industrial and Applied Mathematics.

\bibitem{Berns:2011:BSO:2050613.2050620}
A.~Berns, S.~Ghosh, and S.~V. Pemmaraju.
\newblock Building self-stabilizing overlay networks with the transitive
  closure framework.
\newblock In {\em Proceedings of the International Conference on Stabilization,
  Safety, and Security of Distributed Systems (SSS '11)}, pages 62--76, 2011.

\bibitem{Bhargava:2004:PDO:1007912.1007938}
A.~Bhargava, K.~Kothapalli, C.~Riley, C.~Scheideler, and M.~Thober.
\newblock Pagoda: a dynamic overlay network for routing, data management, and
  multicasting.
\newblock In {\em Proceedings of the ACM Symposium on Parallelism in Algorithms
  and Architectures (SPAA '04)}, pages 170--179, 2004.

\bibitem{BNetzA2013}
{Bundesnetzagentur f\"ur Elektrizit\"at, Gas, Telekommunikation, Post und
  Eisenbahnen}.
\newblock T\"atigkeitsbericht 2012/2013, 2013.

\bibitem{Dijkstra:1974:SSS:361179.361202}
E.~W. Dijkstra.
\newblock Self-stabilizing systems in spite of distributed control.
\newblock {\em Communications of the ACM}, 17(11):643--644, Nov. 1974.

\bibitem{Dolev:2004:HSP:1025126.1025933}
S.~Dolev and R.~I. Kat.
\newblock {HyperTree for Self-Stabilizing Peer-to-Peer Systems}.
\newblock In {\em Proceedings of the IEEE International Symposium on Network
  Computing and Applications (NCA '04)}, pages 25--32, 2004.

\bibitem{Dolev:2013:SDS:2451995.2452270}
S.~Dolev and N.~Tzachar.
\newblock Spanders: Distributed spanning expanders.
\newblock {\em Sci. Comput. Program.}, 78(5):544--555, May 2013.

\bibitem{Gall:2010:TCD:2128719.2128746}
D.~Gall, R.~Jacob, A.~Richa, C.~Scheideler, S.~Schmid, and H.~T\"{a}ubig.
\newblock Time complexity of distributed topological self-stabilization: the
  case of graph linearization.
\newblock In {\em Proceedings of the Latin American Conference on Theoretical
  Informatics (LATIN '10)}, pages 294--305, 2010.

\bibitem{Jacob:2009:DPT:1582716.1582741}
R.~Jacob, A.~Richa, C.~Scheideler, S.~Schmid, and H.~T\"{a}ubig.
\newblock A distributed polylogarithmic time algorithm for self-stabilizing
  skip graphs.
\newblock In {\em Proceedings of the ACM Symposium on Principles of Distributed
  Computing (PODC '09)}, pages 131--140, 2009.

\bibitem{Jacob:2012:THT:2364637.2364951}
R.~Jacob, S.~Ritscher, C.~Scheideler, and S.~Schmid.
\newblock {Towards higher-dimensional topological self-stabilization: A
  distributed algorithm for Delaunay graphs}.
\newblock {\em Theor. Comput. Sci.}, 457:137--148, Oct. 2012.

\bibitem{Kniesburges:2011:RSC:1989493.1989527}
S.~Kniesburges, A.~Koutsopoulos, and C.~Scheideler.
\newblock {Re-Chord: a self-stabilizing chord overlay network}.
\newblock In {\em Proceedings of the ACM Symposium on Parallelism in Algorithms
  and Architectures (SPAA '11)}, pages 235--244, 2011.

\bibitem{Kniesburges2013}
S.~Kniesburges, A.~Koutsopoulos, and C.~Scheideler.
\newblock {CONE-DHT: A Distributed self-stabilizing algorithm for a
  heterogeneous storage system}.
\newblock In {\em Proceedings of the International Symposium on Distributed
  Computing (DISC'13)}, 2013.

\bibitem{MW12}
R.~Meier and R.~Wattenhofer.
\newblock {Peer-to-Peer Streaming in Heterogeneous Environments}.
\newblock {\em Signal Processing: Image Communication}, 27(5):457--469, March
  2012.

\bibitem{Nejdl:2003:SRC:775152.775229}
W.~Nejdl, M.~Wolpers, W.~Siberski, C.~Schmitz, M.~Schlosser, I.~Brunkhorst, and
  A.~L\"{o}ser.
\newblock {Super-peer-based routing and clustering strategies for RDF-based
  peer-to-peer networks}.
\newblock In {\em Proceedings of the ACM International Conference on World Wide
  Web (WWW '03)}, pages 536--543, 2003.

\bibitem{Nor:2011:CSD:2050613.2050640}
R.~M. Nor, M.~Nesterenko, and C.~Scheideler.
\newblock Corona: a stabilizing deterministic message-passing skip list.
\newblock In {\em Proceedings of the International Conference on Stabilization,
  Safety, and Security of Distributed Systems (SSS '11)}, pages 356--370, 2011.

\bibitem{DBLP:books/sp/wehrle2010/PussepLK10}
K.~Pussep, C.~Leng, and S.~Kaune.
\newblock {Modeling User Behavior in P2P Systems}.
\newblock In K.~Wehrle, M.~G{\"u}nes, and J.~Gross, editors, {\em Modeling and
  Tools for Network Simulation}, pages 447--461. Springer, 2010.

\bibitem{Scheideler:2009:DOH:1575973.1576024}
C.~Scheideler and S.~Schmid.
\newblock {A Distributed and Oblivious Heap}.
\newblock In {\em Proceedings of the Internatilonal Collogquium on Automata,
  Languages and Programming (ICALP '09)}, pages 571--582, 2009.

\bibitem{Schneider:1993:SEL:151254.151256}
M.~Schneider.
\newblock Self-stabilization.
\newblock {\em ACM Computing Surveys}, 25(1):45--67, Mar. 1993.

\bibitem{Shaker:2005:SSR:1099548.1100573}
A.~Shaker and D.~S. Reeves.
\newblock {Self-Stabilizing Structured Ring Topology {P2P} Systems}.
\newblock In {\em Proceedings of the IEEE International Conference on
  Peer-to-Peer Computing (P2P '05)}, pages 39--46, 2005.

\bibitem{Srivatsa:2004:SUP:1009385.1010039}
M.~Srivatsa, B.~Gedik, and L.~Liu.
\newblock {Scaling Unstructured Peer-to-Peer Networks With Multi-Tier
  Capacity-Aware Overlay Topologies}.
\newblock In {\em Proceedings of the IEEE Parallel and Distributed Systems,
  Tenth International Conference (ICPADS '04)}, 2004.

\bibitem{DBLP:journals/ton/SteinerEB09}
M.~Steiner, T.~En-Najjary, and E.~W. Biersack.
\newblock {Long term study of peer behavior in the KAD DHT}.
\newblock {\em IEEE/ACM Trans. Netw.}, 17(5):1371--1384, 2009.

\end{thebibliography}

\end{document}